\DeclareMathOperator{\diag}{diag} \DeclareMathOperator{\GL}{GL} \DeclareMathOperator{\rank}{rank}  \DeclareMathOperator{\row}{row} \DeclareMathOperator{\col}{col} \DeclareMathOperator{\nul}{nul}  \DeclareMathOperator{\shape}{shape} \DeclareMathOperator{\EV}{E}
\newcommand{\Rate}{\mathrm{R}} \newcommand{\Perr}{\mathrm{P}_{\mathrm{e}}}  
   \newcommand{\dR}{\mathrm{d}_\mathrm{R}}
    \newcommand{\CMMC}{\mathrm{CMMC}}
\newcommand{\tr}{\mathrm{T}}
 \newcommand{\mat}[1]{\begin{bmatrix} #1 \end{bmatrix}}
\newtheorem{theorem}{Theorem} \newtheorem{lemma}[theorem]{Lemma}    \newtheorem{proposition}[theorem]{Proposition}   \newenvironment{example}{\textit{Example: }}{} 
\newcommand{\CC}{\mathbb{C}} \newcommand{\ZZ}{\mathbb{Z}}   \newcommand{\FF}{\mathbb{F}}   \newcommand{\calC}{\mathcal{C}}                 \newcommand{\calT}{\mathcal{T}}    \newcommand{\calX}{\mathcal{X}} \newcommand{\calY}{\mathcal{Y}}                                                       
\begin{document}

\title{On Multiplicative Matrix Channels over Finite Chain Rings}

\author{ Roberto W. N\'obrega, \IEEEmembership{Member, IEEE}, Chen Feng, \IEEEmembership{Member, IEEE}, \\ Danilo Silva, \IEEEmembership{Member, IEEE}, and Bartolomeu F. Uch\^oa-Filho, \IEEEmembership{Senior Member, IEEE}

\thanks{This paper was presented in part at the IEEE International Symposium on Network Coding, Calgary, Alberta, Canada, June 2013. This work was partly supported by CNPq-Brazil.}

\thanks{R.\ W.\ N\'obrega, D.\ Silva, and Bartolomeu F.\ Uch\^oa-Filho are with the Department of Electrical Engineering of the Federal University of Santa Catarina, Brazil. C.\ Feng is with the Department of Electrical and Computer Engineering, University of Toronto, Toronto, Canada. (email: rwnobrega@eel.ufsc.br; cfeng@eecg.toronto.edu; danilo@eel.ufsc.br; uchoa@eel.ufsc.br).}

}

\maketitle

\begin{abstract} Motivated by physical-layer network coding, this paper considers communication in multiplicative matrix channels over finite chain rings. Such channels are defined by the law $Y =A X$, where $X$ and $Y$ are the input and output matrices, respectively, and $A$ is called the transfer matrix. It is assumed a coherent scenario in which the instances of the transfer matrix are unknown to the transmitter, but available to the receiver. It is also assumed that $A$ and $X$ are independent. Besides that, no restrictions on the statistics of $A$ are imposed. As contributions, a closed-form expression for the channel capacity is obtained, and a coding scheme for the channel is proposed. It is then shown that the scheme can achieve the capacity with polynomial time complexity and can provide correcting guarantees under a worst-case channel model. The results in the paper extend the corresponding ones for finite fields. \end{abstract}

\begin{IEEEkeywords} Channel capacity, discrete memoryless channel, finite chain ring, multiplicative matrix channel, physical-layer network coding. \end{IEEEkeywords}

\section{Introduction}

A \emph{multiplicative matrix channel} (MMC) over a finite field~$\FF_q$ is a communication channel in which the input $\bm{X} \in \FF_q^{n \times \ell}$ and the output $\bm{Y} \in \FF_q^{m \times \ell}$ are related by \begin{equation} \label{eq:channel-law} \bm{Y} = \bm{A} \bm{X} \end{equation} where $\bm{A} \in \FF_q^{m \times n}$ is called the \emph{transfer matrix}\footnote{Throughout this paper, bold symbols are used to represent random entities, while regular symbols are used for their samples.}. Such channels turn out to be suitable models for the end-to-end communication channel between a source node and a sink node in an error-free, erasure-prone network performing random linear network coding~\cite{Koetter.Medard.03,Ho.06,Koetter.Kschischang.08}. In this context, $\bm{X}$ is the matrix whose rows are the $n$ packets (of length~$\ell$) transmitted by the source node, $\bm{Y}$ is the matrix whose rows are the $m$ packets received by the sink node, and $\bm{A}$ is a matrix whose entries are determined by factors such as the network topology and the random choices of the network coding coefficients. Note that each packet can be viewed as an element of the packet space $W = \FF_q^\ell$, a finite vector space.

The present work considers MMCs over \emph{finite chain rings} (of which finite fields are a special case). The motivation comes from \emph{physical-layer network coding}~\cite{Liew.Zhang.Lu.13}. Indeed, recent results show that the modulation employed at the physical layer induces a ``matched choice'' for the ring to be used in the linear network coding layer~\cite{Feng.13}. For instance, if uncoded quaternary phase-shift keying (QPSK) is employed, then the underlying ring should be chosen as $R = \ZZ_2 [i] = \{ 0, 1, i, 1 + i \}$, which is not a finite field, but a finite chain ring. More generally, this is also true for wireless networks employing \emph{compute-and-forward}~\cite{Nazer.Gastpar.11} over arbitrary nested lattices. In this case, the underlying ring happens to be a \emph{principal ideal domain}~$T$ (typically the integers, $\ZZ$, the Gaussian integers, $\ZZ[i]$, or the Eisenstein integers, $\ZZ[\omega])$, with the corresponding message space~$W$ being a \emph{finite $T$-module}~\cite{Feng.13}. As such, \[ W \cong T / \langle d_1 \rangle \times T / \langle d_2 \rangle \times \cdots \times T / \langle d_\ell \rangle, \] where $d_1, d_2, \ldots, d_\ell \in T$ are non-zero non-unit elements satisfying $d_1 \mid d_2 \mid \cdots \mid d_\ell$. A special situation commonly found in practice is when the $d_i$s are all powers of a given prime of~$T$. In this case, the underlying ring can be taken as the finite chain ring $R = T / \langle d_\ell \rangle$, while the message space~$W$ can be seen as a finite $R$-module.

Finite-field MMCs have been studied under an information-theoretic approach according to different assumptions on the probability distribution of the transfer matrix~\cite{Jafari.11,Silva.10,Yang.10.arXiv,Yang.10.ISIT,Nobrega.13}. In this work, following parts of~\cite{Yang.10.arXiv,Yang.10.ISIT}, we consider finite-chain-ring MMCs under a \emph{coherent scenario}, meaning that we assume that the instances of the transfer matrix~$\bm{A}$ are unknown to the transmitter (but available to the receiver). Besides that, we impose no restrictions on the statistics of~$\bm{A}$, except that $\bm{A}$ must be independent of~$\bm{X}$. Furthermore, we are also interested in codes that guarantee reliable communication with a single use of the channel, in the same fashion as~\cite{Silva.08,Silva.Kschischang.11}.

As contributions, we obtain a closed-form expression for the channel capacity, and we propose a coding scheme that combines several codes over a finite field to obtain a code over a finite chain ring. We then show that the scheme can achieve the channel capacity with polynomial time complexity, and that it does not necessarily require the complete knowledge of the probability distribution of $\bm{A}$ [only the expected value of its rank (or, rather, its ``shape''---see Section~\ref{sec:chain-rings}) is needed]. We also present a necessary and sufficient condition under which a code can correct shape deficiencies of the transfer matrix, and we show that the proposed coding scheme can also yield codes with suitable shape-deficiency correction guarantees. Finally, we adapt the coding scheme to the non-coherent scenario, in which the instances of the transfer matrices are unknown to both the transmitter and receiver. Our results extend (and make use of) some of those obtained by Yang et al.\ in~\cite{Yang.10.arXiv,Yang.10.ISIT} and Silva et al.\ in~\cite{Silva.08,Silva.Kschischang.11}, which address the finite field case. It is also worth mentioning that a generalization of the results in~\cite{Silva.10} from finite fields to finite chain rings is presented in~\cite{Feng.13.arXiv}.

The remainder of this paper is organized as follows. Section~\ref{sec:chain-rings} reviews basic concepts on finite chain rings and linear algebra over them. Section~\ref{sec:motivation} motivates the study of MMCs over finite chain rings, while Section~\ref{sec:channel-model} formalizes the channel model. Section~\ref{sec:mmc-finite-field} reviews some of the existing results on MMCs over finite fields, and Section~\ref{sec:mmc-chain-ring} contains our contributions about MMCs over finite chain rings. Finally, Section~\ref{sec:conclusion} concludes the paper.

\section{Background on Finite Chain Rings} \label{sec:chain-rings}

We now present some basic results on finite chain rings and linear algebra over them. For more details, we refer the reader to~\cite{McDonald.74,Nechaev.08,Honold.Landjev.00,Brown.92}. By the term \emph{ring} we always mean a commutative ring with identity $1 \neq 0$.

\subsection{Finite Chain Rings}

A ring $R$ is called a \emph{chain ring} if, for any two ideals $I, J$ of $R$, either $I \subseteq J$ or $J \subseteq I$. It is known that a finite ring $R$ is a chain ring if and only if $R$ is both \emph{principal} (i.e., all of its ideals are generated by a single element) and \emph{local} (i.e., the ring has a unique maximal ideal). Let $\pi \in R$ be any generator for the maximal ideal of~$R$, and let $s$ be the nilpotency index of $\pi$ (i.e., the smallest integer $s$ such that $\pi^s = 0$). Then, $R$ has precisely $s + 1$ ideals, namely, \[ R = \langle \pi^0 \rangle \supset \langle \pi^1 \rangle \supset \cdots \supset \langle \pi^{s -1} \rangle \supset \langle \pi^s \rangle = \{ 0 \}, \] where $\langle x \rangle$ denotes the ideal generated by $x \in R$. Furthermore, it is also known that the quotient $R / \langle \pi \rangle$ is a field, called the \emph{residue field} of $R$. If $q = |R / \langle \pi \rangle|$, then the size of each ideal of $R$ is $|\langle \pi^i \rangle| = q^{s-i}$, for $0 \leq i \leq s$; in particular, $|R| = q^s$. Note that $s=1$ (so that $\pi = 0$) if and only if $R$ is a finite field.

In this paper, if $R$ is a finite chain ring with $s$ non-zero ideals and residue field of order~$q$, then we say that $R$ is a \emph{$(q, s)$ chain ring}. For instance, $\ZZ_8 = \{ 0, 1, \ldots, 7 \}$, the ring of integers modulo $8$, is a $(2, 3)$ chain ring. Its ideals are $\langle 1 \rangle = \ZZ_8$, $\langle 2 \rangle = \{ 0, 2, 4, 6 \}$, $\langle 4 \rangle = \{ 0, 4 \}$, and $\langle 0 \rangle = \{ 0 \}$, and its residue field is $\ZZ_8 / \langle 2 \rangle \cong \FF_2$. Note, however, that two $(q,s)$ chain rings need not be isomorphic.

Let $R$ be a $(q, s)$ chain ring. In addition, let $\pi \in R$ be a fixed generator for its maximal ideal, and let $\Gamma \subseteq R$ be a fixed set of coset representatives for the residue field $R / \langle \pi \rangle$. Without loss of generality, assume $0 \in \Gamma$.\footnote{A particularly nice, canonical choice for $\Gamma$ is $\Gamma(R) = \{ x \in R : x^q = x \}$, known as the \emph{Teichm\"uller coordinate set} of~$R$.} Then, every element $x \in R$ can be written uniquely~as \[ x = \sum_{i=0}^{s-1} x^{(i)} \pi^i, \] where $x^{(i)} \in \Gamma$, for $0 \leq i < s$. The above expression is known as the \emph{$\pi$-adic expansion of~$x$} (with respect to $\Gamma$). For example, the $2$-adic expansion of $6 \in \ZZ_8$ with respect to $\Gamma = \{0,1\}$ is $6 = 0 \cdot 2^0 + 1 \cdot 2^1 + 1 \cdot 2^2$, i.e., the standard binary expansion of $6$.

Note that the uniqueness of the $\pi$-adic expansion (given~$\Gamma$) allows us to define the maps $(\cdot)^{(i)} : R \to \Gamma$, for $0 \leq i < s$. We also define \[ x^{\underline{i}} = \sum_{j=0}^{i-1} x^{(j)} \pi^j, \] for $0 \leq i \leq s$. One can show that $x^{\underline{i}} \equiv_{\pi^i} x$ for all $x \in R$, where $\equiv_a$ denotes congruence modulo $a$ (i.e., $x \equiv_a y$ if and only if $x - y \in \langle a \rangle$). In particular, $x^{(0)} = x^{\underline{1}} \equiv_\pi x$.

\subsection{Modules over Finite Chain Rings}

An \emph{$s$-shape} $\mu = (\mu_0, \mu_1, \ldots, \mu_{s-1})$ is simply a non-decreasing sequence of $s$ non-negative integers, that is, $0 \leq \mu_0 \leq \mu_1 \leq \cdots \leq \mu_{s-1}$. For convenience, we may write the $s$-shape $(m, m, \ldots, m)$ simply as $m$. Also, we set $\mu_{-1} = 0$ whenever it appears on our expressions.

Let $\lambda$ and $\mu$ be two $s$-shapes. We write $\lambda \preceq \mu$ if $\lambda_i \leq \mu_i$ for $0 \leq i < s$; otherwise, we write $\lambda \npreceq \mu$. This yields a partial ordering on the set of all $s$-shapes. Note that, according to our convention, $\lambda \preceq m$ means $\lambda_i \leq m$ for $0 \leq i < s$.

We define the addition of $s$-shapes in a component-wise fashion, that is, $\mu + \lambda = (\mu_0 + \lambda_0, \mu_1 + \lambda_1, \ldots, \mu_{s-1} + \lambda_{s-1})$. The subtraction of $s$-shapes in a component-wise fashion is not always well-defined (because we can get negative elements, or a sequence which is not non-decreasing). But we define $\mu - n = (\mu_0 - n, \mu_1 - n, \ldots, \mu_{s-1} -n)$, provided $n \leq \mu_0$, and $n - \mu = (n - \mu_{s-1}, \ldots, n - \mu_1, n - \mu_0)$, provided $n \geq \mu_{s-1}$, which clearly are well-defined $s$-shapes. Finally, we set $|\mu| = \mu_0 + \mu_1 + \cdots + \mu_{s-1}$.

Let $\mu = (\mu_0, \mu_1, \ldots, \mu_{s-1})$ be an $s$-shape. We define \[ R^\mu \triangleq \underbrace{\langle 1 \rangle \times \cdots \times \langle 1 \rangle}_{\mu_0} \times \underbrace{\langle \pi \rangle \times \cdots \times \langle \pi \rangle}_{\mu_1 - \mu_0} \times \cdots \times \underbrace{\langle \pi^{s-1} \rangle \times \cdots \times \langle \pi^{s-1} \rangle}_{\mu_{s-1} - \mu_{s-2}}. \] Clearly, being a Cartesian product of ideals, $R^\mu$ is a finite $R$-module. Conversely, every finite $R$-module $U$ is isomorphic to $R^\mu$ for some \emph{unique} $s$-shape~$\mu$~\cite[Theorem 2.2]{Honold.Landjev.00}. We call $\mu$ the \emph{shape} of $U$, and write $\mu = \shape U$. Thus, two finite $R$-modules are isomorphic precisely when they have the same shape. Also, from the fact that the size of the ideal $\langle \pi^i \rangle$ is given by $q^{s-i}$, we conclude that \begin{equation} \label{eq:size-module} |R^\mu| = q^{|\mu|}. \end{equation}

Note that, according to our convention that $m = (m, m, \ldots, m)$, the notation $R^m$ stands for the same object, whether $m$ is interpreted as an integer or as an $s$-shape. Also, in the finite field case ($s=1$), modules are vector spaces, and we have $\shape U = (m)$, where $m$ is the vector space dimension of $U$.

\subsection{Matrices over Finite Chain Rings}

For any subset $S \subseteq R$, we denote by $S^{m \times n}$ the set of all $m \times n$ matrices with entries in~$S$. The set of all invertible $n \times n$ matrices over $R$ is called the \emph{general linear group of degree $n$ over $R$}, and is denoted by $\GL_n(R)$.

Let $A \in R^{m \times n}$, and set $r = \min \{n, m \}$. A diagonal matrix (not necessarily square) $$D = \diag(d_1, d_2, \ldots, d_r) \in R^{m \times n}$$ is called a \emph{Smith normal form} of $A$ if there exist matrices $P \in \GL_m(R)$ and $Q \in \GL_n(R)$ (not necessarily unique) such that $A = P D Q$ and $d_1 \mid d_2 \mid \cdots \mid d_r$. It is known that matrices over principal rings (in particular, finite chain rings) always have a Smith normal form, which is unique up to multiplication of the diagonal entries by units. In this work, we shall require such entries to be powers of $\pi \in R$; by doing so, the Smith normal form becomes (absolutely) unique.

Let $\row A$ and $\col A$ denote the row and column span of $A \in R^{m \times n}$, respectively. Clearly, $\row A$ and $\col A$ are both $R$-modules. Moreover, by using the Smith normal form, we can easily prove that $\row A$ is isomorphic to $\col A$. We define the \emph{shape} of $A$ as $\shape A = \shape (\row A) = \shape (\col A)$. We thus have that $\mu = \shape A$ if and only if the Smith normal form of $A$ is given by \begin{equation} \label{eq:smith-normal-form} \diag (\underbrace{1, \ldots, 1}_{\mu_0}, \underbrace{\pi, \ldots, \pi}_{\mu_1 - \mu_0}, \ldots, \underbrace{\pi^{s-1}, \ldots, \pi^{s-1}}_{\mu_{s-1} - \mu_{s-2}}, \underbrace{0, \ldots, 0}_{r - \mu_{s-1}}), \end{equation} where $r = \min \{n, m \}$. For example, consider the matrix \[ A = \mat{4 & 3 & 6 \\ 6 & 7 & 2} \] over $\ZZ_8$. Then, $A = P D Q$, where \[ P = \mat{1 & 0 \\ 1 & 1}, \quad D = \mat{1 & 0 & 0 \\ 0 & 2 & 0}, \quad Q = \mat{4 & 3 & 6 \\ 1 & 2 & 6 \\ 5 & 6 & 3}, \] so that $\shape A = (1, 2, 2)$. We also define the null space of~$A$ as usual, that is, $\nul A = \{ x \in R^n : Ax = 0 \}$. From the first isomorphism theorem~\cite[\S 10.2]{Dummit.04}, $\col A \cong R^n / \nul A$. Also,~\cite[Theorem~2.5]{Honold.Landjev.00} \begin{equation} \label{eq:rank-nullity} \shape A = n - \shape (\nul A). \end{equation}

\subsection{Matrices with Row Constraints} \label{sub:matrices-row-constraints}

Let $\lambda$ be an $s$-shape. We denote by $R^{n \times \lambda}$ the subset of matrices in $R^{n \times \ell}$ whose rows are elements of~$R^\lambda$, where $\ell = \lambda_{s-1}$. From \eqref{eq:size-module}, we have $|R^{n \times \lambda}| = q^{n |\lambda|}$. For instance, let $R = \ZZ_8$, $n = 2$, and $\lambda = (1,2,3)$, so that $\ell = 3$. Then, \[ R^{n \times \lambda} = \left\{ \mat{x_{11} & 2x_{12} & 4x_{13} \\ x_{21} & 2x_{22} & 4x_{23}} : x_{i,j} \in R \right\} \subseteq R^{n \times \ell}. \] Note that the matrix $A$ above does not belong to $R^{n \times \lambda}$, while $D$ does.

Finally, we extend the $\pi$-adic expansion map $(\cdot)^{(i)}$ to matrices over $R$ in an element-wise fashion. Thus, $A \in R^{n \times \lambda}$ if and only if $A^{(i)} = \mat { B_i & 0 } \in \Gamma^{n \times \ell}$, for some $B_i \in \Gamma^{n \times \lambda_i}$, for $0 \leq i < s$.

\section{Motivating Examples} \label{sec:motivation}

\subsection{MMCs as End-to-End Models for PNC}

\begin{figure} \centering \includegraphics[page=1]{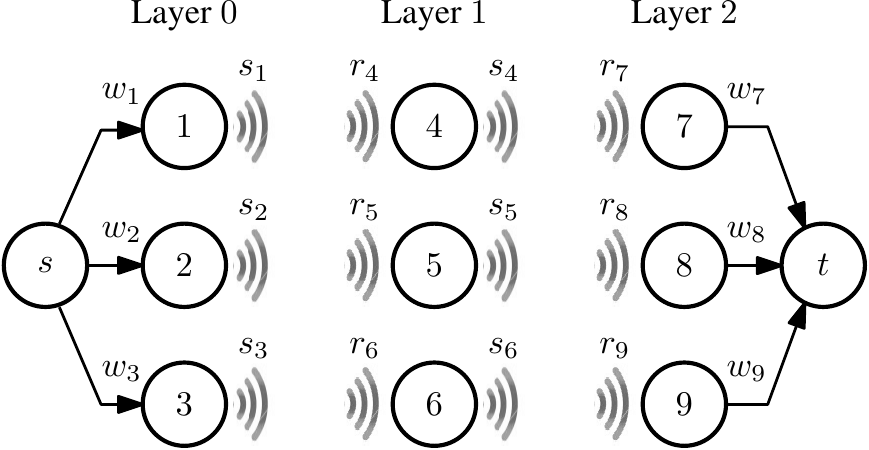} \caption{Wireless layered network with $L=3$ layers and $n=3$ relay nodes per layer.} \label{fig:pnc-network} \end{figure} Figure~\ref{fig:pnc-network} shows a wireless layered network with $L=3$ layers and $n=3$ relay nodes per layer. Suppose that the network employs physical-layer network coding, with the packets from the upper layer being elements of some $R$-module $W = R^\lambda$, where $R$ is a $(q,s)$ chain ring. Let $w_1, w_2, w_3 \in R^\lambda$ be the packets transmitted by the source node~$s$, and let $w_7, w_8, w_9 \in R^\lambda$ be the packets received by the sink node~$t$. Let $s_1, s_2, \ldots, s_6$ be the physical signals (complex vectors coming from a given lattice~\cite{Nazer.Gastpar.11,Feng.13}) transmitted by the nodes $1, 2, \ldots, 6$, respectively, and let $r_4, r_5, \ldots, r_9$ be the physical signals received by the nodes $4, 5, \ldots, 9$, respectively, as shown in the figure. Note that, in this example, for the sake of simplicity, the nodes $1$, $2$, and $3$ do not receive physical signals from node $s$, but rather packets $w_1, w_2, w_3$ coming directly from the upper layer. Similarly, the nodes $7$, $8$, and $9$ do not transmit physical signals to node $t$, but rather packets $w_7, w_8, w_9$ through the upper layer.

From Layer $0$ to Layer $1$, the system works as follows. Nodes $1$, $2$, and $3$ start by encoding the packets $w_1, w_2, w_3 \in R^\lambda$ into the signals $s_1, s_2, s_3$, respectively. The signals $s_1, s_2, s_3$ are then transmitted simultaneously, being subject to independent block fading and superimposed in the physical medium. Therefore, the signal received by node $j$, for $j = 4, 5, 6$, is given by $r_j = h_{1j} s_1 + h_{2j} s_2 + h_{3j} s_3 + n_j,$ where $h_{1j}, h_{2j}, h_{3j} \in \CC$ are fading coefficients and $n_j$ is a complex-valued noise vector. From $r_j$ and $(h_{1j}, h_{2j}, h_{3j})$, by employing the principles of PNC, the node $j$, for $j = 4, 5, 6$, can infer\footnote{Note that any additive error introduced at the physical layer may be avoided, at each relay node, by employing a linear error-detecting code over the underlying ring.} a linear combination $w_j \in R^\lambda$ of the packets $w_1, w_2, w_3$, that is, $w_j = b_{1j} w_1 + b_{2j} w_2 + b_{3j} w_3,$ for some $b_{1j}, b_{2j}, b_{3j} \in R$.

The system operates similarly from Layer $1$ to Layer $2$, so that, the node $j$, for $j = 7, 8, 9$, can infer a linear combination $w_j \in R^\lambda$ of the packets $w_4, w_5, w_6$, which is finally delivered to the sink node~$t$.

By $R$-module linearity, it is not hard to check that the relationship between the transmitted packets $X$ and the received packets $Y$, where \[ X = \mat{w_1 \\ w_2 \\ w_3} \in R^{n \times \lambda} \quad \text{and} \quad Y = \mat{w_7 \\ w_8 \\ w_9} \in R^{n \times \lambda}, \] is given by \[ Y = AX, \] where \[ A = \left[ \begin{array}{ccc} b_{47} & b_{57} & b_{67} \\ b_{48} & b_{58} & b_{68} \\ b_{49} & b_{59} & b_{69} \end{array} \right] \left[ \begin{array}{ccc} b_{14} & b_{24} & b_{34} \\ b_{15} & b_{25} & b_{35} \\ b_{16} & b_{26} & b_{36} \end{array} \right] \in R^{m \times n}. \] In other words, the end-to-end communication between the source node and the sink node is suitably modeled by an MMC over a finite chain ring.

\subsection{Communication via MMCs over Finite Chain Rings}

Consider now an MMC over the chain ring $R = \ZZ_8$ with packet space given by $W = \ZZ_8 \times 2\ZZ_8 = R^\lambda$, where $\lambda = (1,2,2)$. Assume that $n = m = 3$. Suppose that the receiver observes $(Y, A) \in R^{m \times \lambda} \times R^{m \times n}$, where \[ Y = \mat{ 7 & 2 \\ 4 & 4 \\ 6 & 0 \\ 4 & 0 }, \quad \text{and} \quad A = \mat{ 1 & 0 & 0 & 0 \\ 0 & 2 & 0 & 0 \\ 0 & 0 & 2 & 0 \\ 0 & 0 & 0 & 4 \\ }. \] What information can the receiver extract about the channel input $X = \mat{x_{ij}} \in R^{n \times \lambda}$ ($1 \leq i \leq 4$, $1 \leq j \leq 2$)? From the equation $AX = Y$ we may conclude that \begin{align*} \begin{cases} \phantom{1} x_{11} & = 7 \\ 2 x_{21} & = 4 \\ 2 x_{31} & = 6 \\ 4 x_{41} & = 4 \end{cases} \implies \begin{cases} x_{11} & = \framebox[1.25em][c]{1} \cdot 4 + \framebox[1.25em][c]{1} \cdot 2 + \framebox[1.25em][c]{1} \cdot 1 \\ x_{21} & = \makebox[1.25em][c]{?} \cdot 4 + \framebox[1.25em][c]{1} \cdot 2 + \framebox[1.25em][c]{0} \cdot 1 \\ x_{31} & = \makebox[1.25em][c]{?} \cdot 4 + \framebox[1.25em][c]{1} \cdot 2 + \framebox[1.25em][c]{1} \cdot 1 \\ x_{41} & = \makebox[1.25em][c]{?} \cdot 4 + \makebox[1.25em][c]{?} \cdot 2 + \framebox[1.25em][c]{1} \cdot 1 \end{cases} \end{align*} and \begin{align*} \begin{cases} \phantom{1} x_{12} & = 2 \\ 2 x_{22} & = 4 \\ 2 x_{32} & = 0 \\ 4 x_{42} & = 0 \end{cases} \implies \begin{cases} x_{12} & = \framebox[1.25em][c]{0} \cdot 4 + \framebox[1.25em][c]{1} \cdot 2 + {0} \cdot 1 \\ x_{22} & = \makebox[1.25em][c]{?} \cdot 4 + \framebox[1.25em][c]{1} \cdot 2 + {0} \cdot 1 \\ x_{32} & = \makebox[1.25em][c]{?} \cdot 4 + \framebox[1.25em][c]{0} \cdot 2 + {0} \cdot 1 \\ x_{42} & = \makebox[1.25em][c]{?} \cdot 4 + \makebox[1.25em][c]{?} \cdot 2 + {0} \cdot 1 \end{cases} \end{align*} where ``$?$'' denotes unknown entries, the squared entries indicates information that the receiver can extract about $X$, and the non-squared entries (forced to $0$) are due to the packet space constraints. Note that the unknown entries are due to $\rho = \shape A = (1,3,4)$, while the entries forced to $0$ are due to $\lambda = (1,2,2)$ (see~\S\ref{sub:matrices-row-constraints}).

Therefore, in the (non realistic) situation in which \emph{both} the transmitter and the receiver know the transfer matrix, it is clear that $4 + 6 + 2 = 12 \text{ bits}$ of information can be sent through the channel. (In general, it is not hard to check that $\rho_2 \lambda_0 + \rho_1 \lambda_1 + \rho_0 \lambda_2$ bits can be transmitted.) For such, the squared bits or $X$ should be set to information bits, while the remaining bits cannot carry information.

This idea can be generalized if $A$ is not diagonal, but an arbitrary matrix of shape $\rho$. In this case, we compute invertible matrices $P$ and $Q$ such that $A = PDQ$, where $D$ is the Smith normal form of $A$, as given by~\eqref{eq:smith-normal-form}. We then set $\tilde{Y} \triangleq P^{-1} Y$ and $\tilde{X} \triangleq QX$, so that we can communicate using the equivalent channel $\tilde{Y} = D \tilde{X}$ by employing the same scheme as before.

In this paper, we consider the problem of transmission of information through finite-chain-ring MMCs in the more realistic situation where the transfer matrix is unknown to the transmitter but known to the receiver (i.e., the coherent scenario) and chosen randomly according to some given probability distribution. It is shown that we can transmit the same amount of information as if the transmitter knew the transfer matrix, that is, at a rate given by $\EV[\bm{\rho}_2] \lambda_0 + \EV[\bm{\rho}_1] \lambda_1 + \EV[\bm{\rho}_0] \lambda_2$, where $\bm{\rho} = (\bm{\rho}_0, \bm{\rho}_1, \bm{\rho}_2)$ is the random variable representing the shape of the random transfer matrix, and $\EV[\cdot]$ denotes expected value. To do so, however, a non-trivial coding scheme (potentially using the channel multiple times and allowing a non-zero but vanishing probability of error) is needed. We also address the problem of reliable communication with a single use of the channel. In this case, we show that, as long as $\lambda_0 \geq n$ and the shape deficiency of the transfer matrix is at most a given value, say~$\beta$, we can have a one-shot zero-error coding scheme of rate given by $(n - \beta_0) \lambda_0 + (n - \beta_1) \lambda_1 + (n - \beta_2) \lambda_2$, which is the best rate one could achieve with zero error.

\section{Channel Model} \label{sec:channel-model}

We next formalize the channel model. Let $R$ be a $(q,s)$ chain ring, let $n$ and $m$ be positive integers, and let $\lambda$ be an $s$-shape. Also, let $p_{\bm{A}}$ be a probability distribution over $R^{m \times n}$. From these, we can define the \emph{coherent MMC over~$R$} as a \emph{discrete memoryless channel}~(see, e.g.,~\cite{Cover.06}) with \emph{input alphabet} $\calX = R^{n \times \lambda}$, \emph{output alphabet} $\calY = R^{m \times \lambda} \times R^{m \times n}$, and \emph{channel transition probability} \[ p_{\bm{Y},\bm{A}|\bm{X}}(Y,A|X) = \begin{cases} p_{\bm{A}}(A), & \text{if } Y = AX, \\ 0, & \text{otherwise}. \end{cases} \] In this work, we shall denote the channel just defined by $\CMMC (n, m, \lambda, p_{\bm{A}})$, with the dependence on $R$ being implicit. We also make use of the random variable ${\bm \rho} = \shape {\bm{A}}$, distributed according to \[ p_{\bm \rho}(\rho) = \sum_{A : \, \shape A = \rho} p_{\bm{A}}(A), \] Finally, we set $\ell = \lambda_{s-1}$ (interpreted as the packet length).

A \emph{matrix (block) code} of \emph{length} $N$ is defined by a pair $(\calC, \Phi)$, where $\calC \subseteq (R^{n \times \lambda})^N$ is called the \emph{codebook}, and $\Phi : (R^{m \times \lambda} \times R^{m \times n})^N \to \calC$ is called the \emph{decoding function}. We sometimes abuse the notation and write $\calC$ instead of $(\calC, \Phi)$. The \emph{rate} of the code $\calC$ is defined by $\mathrm{R}(\calC) = (\log |\calC|) / N$, and its \emph{probability of error} in the channel, denoted by $\mathrm{P}_{\mathrm{e}}(\calC)$, is defined as usual~\cite{Cover.06}. When $N=1$, we say that $\calC$ is a \emph{one-shot} code; otherwise, we say that $\calC$ is a \emph{multi-shot} code.

The \emph{capacity} of the channel is given by \[ C = \max_{p_{\bm{X}}} I(\bm{X} ; \bm{Y}, \bm{A}), \] where $I(\bm{X} ; \bm{Y}, \bm{A})$ is the mutual information between the input~$\bm{X}$ and the output~$(\bm{Y}, \bm{A})$, and the maximization is over all possible input distributions~$p_{\bm{X}}$.

From now on, all logarithms are to the base $q$, so that rates and capacities will always be expressed in $q$-ary digits (per channel use).

\section{Review of the MMC over a Finite Field} \label{sec:mmc-finite-field}

In this section, we briefly review some of the existing results about the coherent MMC over a finite field (i.e.,~$R = \FF_q$). Note that, in this case, $s = 1$, $\lambda = \ell$, and $\bm{\rho} = \rank{\bm{A}} \triangleq \bm{r}$.

\subsection{Finite-Field Coherent MMC}

The following result is due to Yang et al.~\cite{Yang.10.ISIT,Yang.10.arXiv}.

\begin{theorem} \label{thm:capacity-finite-field} \cite[Prop.~1]{Yang.10.arXiv} The capacity of $\CMMC(n, m, \ell, p_{\bm{A}})$ is given by \[ C = \EV[\bm{r}] \ell, \] and is achieved if the input is uniformly distributed over $\FF_q^{n \times \ell}$. In particular, the capacity depends on $p_{\bm{A}}$ only through $\EV[\bm{r}]$. \end{theorem}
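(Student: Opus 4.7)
The plan is to analyze the mutual information $I(\bm{X};\bm{Y},\bm{A})$ by exploiting independence of $\bm{A}$ and $\bm{X}$ and the deterministic structure of the channel, and then separately prove matching upper and lower bounds.

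First I would write
\[
I(\bm{X};\bm{Y},\bm{A}) = I(\bm{X};\bm{A}) + I(\bm{X};\bm{Y}\mid\bm{A}) = H(\bm{Y}\mid\bm{A}) - H(\bm{Y}\mid\bm{X},\bm{A}),
\]
using that $\bm{A}$ is independent of $\bm{X}$ for the first equality. Because $\bm{Y}=\bm{A}\bm{X}$ is a deterministic function of $(\bm{X},\bm{A})$, the conditional entropy $H(\bm{Y}\mid\bm{X},\bm{A})$ vanishes, so the whole task reduces to bounding and maximizing $H(\bm{Y}\mid\bm{A})$ over input distributions $p_{\bm{X}}$.

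For the upper bound, I would condition on a fixed realization $\bm{A}=A$ of rank $r=\rank A$. Each column of $\bm{Y}=A\bm{X}$ lies in $\col A \subseteq \FF_q^{m}$, a vector space of size $q^{r}$. Hence $\bm{Y}$ given $\bm{A}=A$ is supported on $(\col A)^{\ell}$, a set of size $q^{r\ell}$, giving $H(\bm{Y}\mid\bm{A}=A)\le r\ell$ (using base-$q$ logarithms). Averaging over $\bm{A}$ yields
\[
I(\bm{X};\bm{Y},\bm{A}) = H(\bm{Y}\mid\bm{A}) \le \EV[\bm{r}]\,\ell,
\]
which is the claimed capacity upper bound and holds for every input distribution.

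For achievability, I would take $p_{\bm{X}}$ uniform on $\FF_q^{n\times\ell}$. By independence of $\bm{A}$ and $\bm{X}$, for each fixed $A$ the conditional law of $\bm{X}$ is still uniform on $\FF_q^{n\times\ell}$. Each column of $A\bm{X}$ is $A$ applied to an independent uniform vector in $\FF_q^{n}$, and the standard fact that a uniform input to a linear map produces a uniform output on its image gives that every column of $\bm{Y}$ is uniform on $\col A$, with the $\ell$ columns mutually independent. Hence $\bm{Y}\mid\bm{A}=A$ is uniform on $(\col A)^{\ell}$, achieving $H(\bm{Y}\mid\bm{A}=A)=r\ell$, so $I(\bm{X};\bm{Y},\bm{A})=\EV[\bm{r}]\,\ell$. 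Combined with the upper bound this proves $C=\EV[\bm{r}]\,\ell$, and since the bound depends on $p_{\bm{A}}$ only through $\EV[\bm{r}]$, the final claim follows. No step looks genuinely hard here; the only place warranting care is the achievability argument that uniform $\bm{X}$ induces a uniform column distribution on $\col A$, which is a one-line consequence of the cosets of $\nul A$ in $\FF_q^{n}$ all having equal size $q^{n-r}$.
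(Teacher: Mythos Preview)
Your proposal is correct and follows essentially the same approach as the paper: the reduction of $I(\bm{X};\bm{Y},\bm{A})$ to $H(\bm{Y}\mid\bm{A})$ via independence and determinism, the column-support bound $H(\bm{Y}\mid\bm{A}=A)\le (\rank A)\,\ell$, and the achievability via uniform $\bm{X}$ using the coset-counting argument $|\nul A|/|\FF_q^n|=1/|\col A|$ are exactly the $s=1$ specialization of the paper's Lemma~\ref{lem:entropy-GX} and the proof of Theorem~\ref{thm:capacity}.
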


Also in~\cite{Yang.10.ISIT,Yang.10.arXiv}, two multi-shot coding schemes for MMCs over finite fields are proposed, which are able to achieve the channel capacity given in Theorem~\ref{thm:capacity-finite-field}. The first scheme makes use of rank-distance codes (more on these later) and requires $\ell \geq n$ in order to be capacity-achieving; the second scheme is based on random coding and imposes no restriction on~$\ell$. Both schemes have polynomial time complexity. Also important, both coding schemes are ``universal'' in the sense that only the value of $\EV[\bm{r}]$ is taken into account in the code construction (the full knowledge of~$p_{\bm{A}}$, or even~$p_{\bm{r}}$, is not required).

\subsection{Rank Deficiency Correction Guarantees}

We say that a one-shot matrix code $\calC \subseteq \FF_q^{n \times \ell}$ is \emph{$b$-rank-deficiency-correcting} if it is possible to uniquely recover $X$ from $(Y, A)$, where $Y = AX$, as long as $X \in \calC$ and $\rank A \geq n - b$. In other words, $\calC$ is $b$-rank-deficiency-correcting if and only if, for every two distinct codewords $X_1, X_2 \in \calC$, there is no matrix $A \in \FF_q^{m \times n}$ such that $\rank A \geq n - b$ and $AX_1 = AX_2$.

Recall that the \emph{rank distance} between two matrices $X_1, X_2 \in \FF_q^{n \times \ell}$ is defined as $\dR(X_1, X_2) = \rank(X_2 - X_1)$. For a code $\calC \subseteq \FF_q^{n \times \ell}$, define $\dR(\calC) = \min \{ \dR(X_1, X_2) : X_1, X_2 \in \calC, X_1 \neq X_2 \}$, called the \emph{minimum distance} of the code. The rank distance provides a necessary and sufficient condition under which a code is $b$-rank-deficiency-correcting. The following result is a special case of a result due to Silva et al.~\cite{Silva.08,Silva.Kschischang.11}.

\begin{theorem} \label{thm:iff-finite-field} \cite[Thm.~2]{Silva.Kschischang.11} A code $\calC \subseteq \FF_q^{n \times \ell}$ is $b$-rank-deficiency-correcting if and only if $\dR(\calC) > b$. \end{theorem}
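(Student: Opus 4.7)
The plan is to reduce the decoding-ambiguity condition to a simple rank statement on differences of codewords. Given distinct $X_1, X_2 \in \calC$, an adversarial transfer matrix $A$ causes $AX_1 = AX_2$ precisely when $AD = 0$, where $D = X_2 - X_1 \in \FF_q^{n \times \ell}$. Thus $\calC$ is $b$-rank-deficiency-correcting if and only if for every distinct pair $X_1, X_2 \in \calC$ there is no $A$ with $\rank A \geq n - b$ satisfying $AD = 0$.

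Next, I would analyze the maximum rank of such an $A$. The condition $AD = 0$ means that each row of $A$ lies in the left null space $N = \{a \in \FF_q^{1\times n} : aD = 0\}$, which by rank-nullity over $\FF_q$ has dimension $n - \rank D$. Since the row space of $A$ is a subspace of $N$, we get $\rank A \leq n - \rank D$; conversely, one can pick any basis of $N$ to form the rows of an $(n-\rank D) \times n$ matrix $A$ that meets this bound (and then possibly append zero rows or pad to any $m \geq n-\rank D$). Hence the maximum rank of an $A$ with $AD = 0$ equals $n - \rank D = n - \dR(X_1, X_2)$.

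Combining these two steps, the existence of an offending $A$ with $\rank A \geq n - b$ is equivalent to $n - \dR(X_1, X_2) \geq n - b$, i.e., $\dR(X_1, X_2) \leq b$. Therefore $\calC$ is $b$-rank-deficiency-correcting if and only if $\dR(X_1, X_2) > b$ for every distinct pair $X_1, X_2 \in \calC$, which is precisely the statement $\dR(\calC) > b$.

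There is no real obstacle here; the argument is a straightforward rank-nullity calculation. The only mild subtlety is the ``only if'' direction, where one has to actually exhibit a witness $A$ of rank $n - \rank D$ achieving $AD = 0$, which is handled by the basis construction of the left null space of $D$ described above.
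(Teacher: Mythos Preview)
Your proof is correct. The paper does not actually prove Theorem~\ref{thm:iff-finite-field} directly---it is quoted as a known result from \cite{Silva.Kschischang.11}---but the paper does prove its chain-ring generalization (Theorem~\ref{thm:iff}), and that proof, specialized to $s=1$, is precisely your argument: in one direction take $A$ with $\row A = \nul(X_2-X_1)^{\tr}$ (your basis-of-the-left-null-space construction), and in the other use rank--nullity to bound $\rank(X_2-X_1)$ via $\col(X_2-X_1)\subseteq\nul A$. So your approach coincides with the paper's.
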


Rank-distance codes were studied by Gabidulin~\cite{Gabidulin.85}, which shows that any linear rank-distance code $\calC \subseteq \FF_q^{n \times \ell}$ of dimension~$k$ has rate given by \[ \Rate(\calC) = k \ell \] and minimum distance satisfying \[ \dR(\calC) \leq n - k + 1. \] Codes achieving equality in the above are said to be \emph{maximum rank distance} (MRD) codes. A class of such codes for every $n$, $\ell$, $k$, and $q$ such that $\ell \geq n$ was presented by Gabidulin. Theorem~\ref{thm:iff-finite-field} implies that any linear MRD code of dimension $k$ is $(n - k)$-rank-deficiency-correcting.

Finally, note that if a code $\calC \subseteq \FF_q^{n \times \ell}$ is $(n - r)$-rank-deficiency-correcting for every~$r$ in the support of $\bm{r} = \rank \bm{A}$, then $\calC$ has $\Perr(\calC) = 0$ in $\CMMC(n, m, \ell, p_{\bm{A}})$. In particular, if $\bm{r}$ is a constant, a zero-error capacity-achieving coding scheme can be obtained by employing a linear MRD code of dimension $k = r$.

\section{The MMC over a Finite Chain Ring} \label{sec:mmc-chain-ring}

This section contains the contributions of the paper, where we consider again the case of a general $(q, s)$ chain ring~$R$.

\subsection{Channel Capacity}

We start by computing the channel capacity. The following result generalizes Theorem~\ref{thm:capacity-finite-field}.

\begin{theorem} \label{thm:capacity} The capacity of $\CMMC(n, m, \lambda, p_{\bm{A}})$ is given by \[ C = \sum_{i=0}^{s-1} \EV [\bm{\rho}_{s-i-1}] \lambda_i, \] and is achieved if the input is uniformly distributed over $R^{n \times \lambda}$. In particular, the capacity depends on $p_{\bm{A}}$ only through $\EV[\bm{\rho}]$. \end{theorem}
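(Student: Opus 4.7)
Proof plan. Since $\bm{A}$ is independent of $\bm{X}$ and $\bm{Y}=\bm{A}\bm{X}$ is a deterministic function of $(\bm{X},\bm{A})$, the usual mutual information decomposition collapses to
\[
I(\bm{X};\bm{Y},\bm{A}) \;=\; H(\bm{A}) + H(\bm{Y}\mid\bm{A}) - H(\bm{A}\mid\bm{X}) - H(\bm{Y}\mid\bm{X},\bm{A}) \;=\; H(\bm{Y}\mid\bm{A}).
\]
So the capacity problem reduces to maximizing $H(\bm{Y}\mid\bm{A})=\sum_A p_{\bm{A}}(A)\,H(\bm{Y}\mid\bm{A}=A)$ over $p_{\bm{X}}$. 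For any fixed $A$, the map $X\mapsto AX$ is an $R$-module homomorphism from $R^{n\times\lambda}$ to $R^{m\times\lambda}$, so $H(\bm{Y}\mid\bm{A}=A)\leq\log|A\cdot R^{n\times\lambda}|$, with equality iff $\bm{Y}$ is uniform on the image. Taking $\bm{X}$ uniform on $R^{n\times\lambda}$ (which is independent of $\bm{A}$) makes $\bm{Y}$ the image of a uniform input under a group homomorphism, hence uniform over $A\cdot R^{n\times\lambda}$, simultaneously maximizing every summand. Thus $C = \EV\bigl[\log|\bm{A}\cdot R^{n\times\lambda}|\bigr]$.

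The remaining task is to evaluate $\log|A\cdot R^{n\times\lambda}|$ in terms of $\rho=\shape A$ only. I would do this columnwise rather than rowwise, to avoid diagonalizing $A$. Writing $X$ as its columns $x^{(1)},\ldots,x^{(\ell)}$, the constraint $X\in R^{n\times\lambda}$ becomes: the $t$-th column lies in $\langle\pi^{k_t}\rangle^n$, where $k_t=u$ for $\lambda_{u-1}<t\leq\lambda_u$ (with $\lambda_{-1}=0$). Under the uniform input, these columns are independent and column $t$ of $AX$ is uniform over $A\cdot\langle\pi^{k_t}\rangle^n=\pi^{k_t}\col A$ (surjectivity of $A$ onto $\pi^{k_t}\col A$ is immediate: $\pi^{k_t}Az=A(\pi^{k_t}z)$). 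Columns of $AX$ are independent as well, so
\[
H(\bm{Y}\mid\bm{A}=A) \;=\; \sum_{u=0}^{s-1}(\lambda_u-\lambda_{u-1})\,\log|\pi^u\col A|.
\]

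Next, $\col A\cong R^\rho$ as $R$-modules (definition of $\shape A$), so $\pi^u\col A\cong \pi^u R^\rho$. Expanding the latter as a Cartesian product of ideals via the definition of $R^\rho$ and using $|\langle\pi^k\rangle|=q^{s-k}$ for $0\leq k\leq s$, one obtains
\[
\log|\pi^u R^\rho| \;=\; \sum_{j=0}^{s-u-1}\rho_j,
\]
where terms with $u+j\geq s$ drop out because the corresponding factors become trivial. Substituting and swapping the order of summation,
\[
H(\bm{Y}\mid\bm{A}=A) \;=\; \sum_{u=0}^{s-1}(\lambda_u-\lambda_{u-1})\sum_{j=0}^{s-u-1}\rho_j \;=\; \sum_{j=0}^{s-1}\rho_j\,\lambda_{s-j-1} \;=\; \sum_{i=0}^{s-1}\rho_{s-i-1}\lambda_i.
\]
Averaging over $\bm{A}$ yields the claimed capacity expression and shows it depends on $p_{\bm{A}}$ only through $\EV[\bm{\rho}]$.

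The main technical point is the bookkeeping that identifies the shape (equivalently, the size) of $\pi^u\col A$ cleanly in terms of $\rho$; everything else is standard once the columnwise-independence observation is in place. Working with columns instead of the Smith normal form makes the counting transparent and avoids verifying that multiplication by an invertible $Q\in\GL_n(R)$ preserves uniformity on $R^{n\times\lambda}$ (which is true since $R^\lambda$ is an $R$-module and such $Q$ only permutes rows up to unit combinations, but this step is unnecessary).
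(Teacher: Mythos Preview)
Your proof is correct and follows essentially the same route as the paper: both reduce $I(\bm{X};\bm{Y},\bm{A})$ to $H(\bm{Y}\mid\bm{A})$, split $X$ into column blocks according to $\lambda$, bound the output entropy of the $u$-th block by $\log|\pi^u\col A|=\sum_{j=0}^{s-u-1}\rho_j$, and show equality for the uniform input. The only cosmetic differences are that you invoke the general ``uniform input through a group homomorphism gives uniform output'' principle where the paper explicitly counts preimages via $|\nul \pi^i A|$, and you compute $|\pi^u\col A|$ through the module isomorphism $\col A\cong R^\rho$ where the paper (in its Lemma on $H(\bm{Y})$) equivalently reads off $\shape \pi^i A$; contrary to your closing remark, the paper's capacity argument does not rely on diagonalizing $A$ either.
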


The following example illustrates the theorem.

\medskip

\begin{example} \begin{figure*} \centering \begin{subfigure}{0.48\textwidth} \centering \includegraphics[width=8cm]{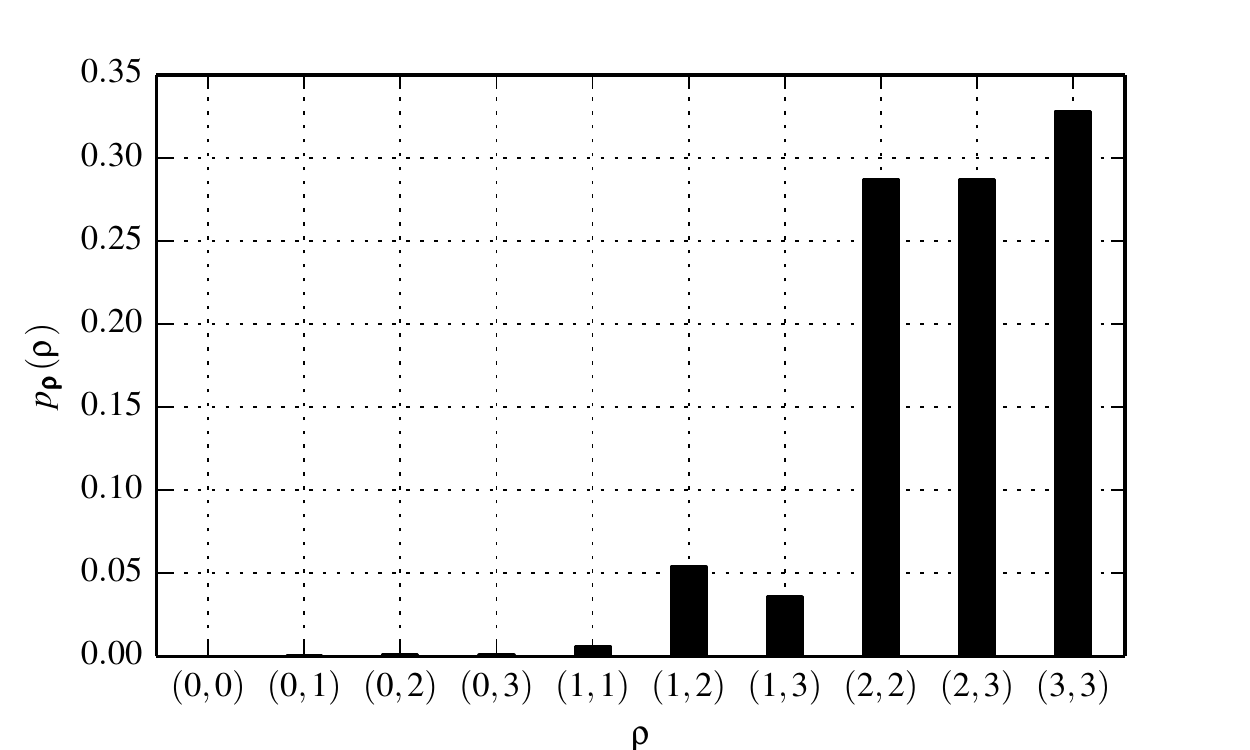} \caption{} \label{fig:example-shape-dist} \end{subfigure}

\begin{subfigure}{0.48\textwidth} \centering \includegraphics[width=8cm]{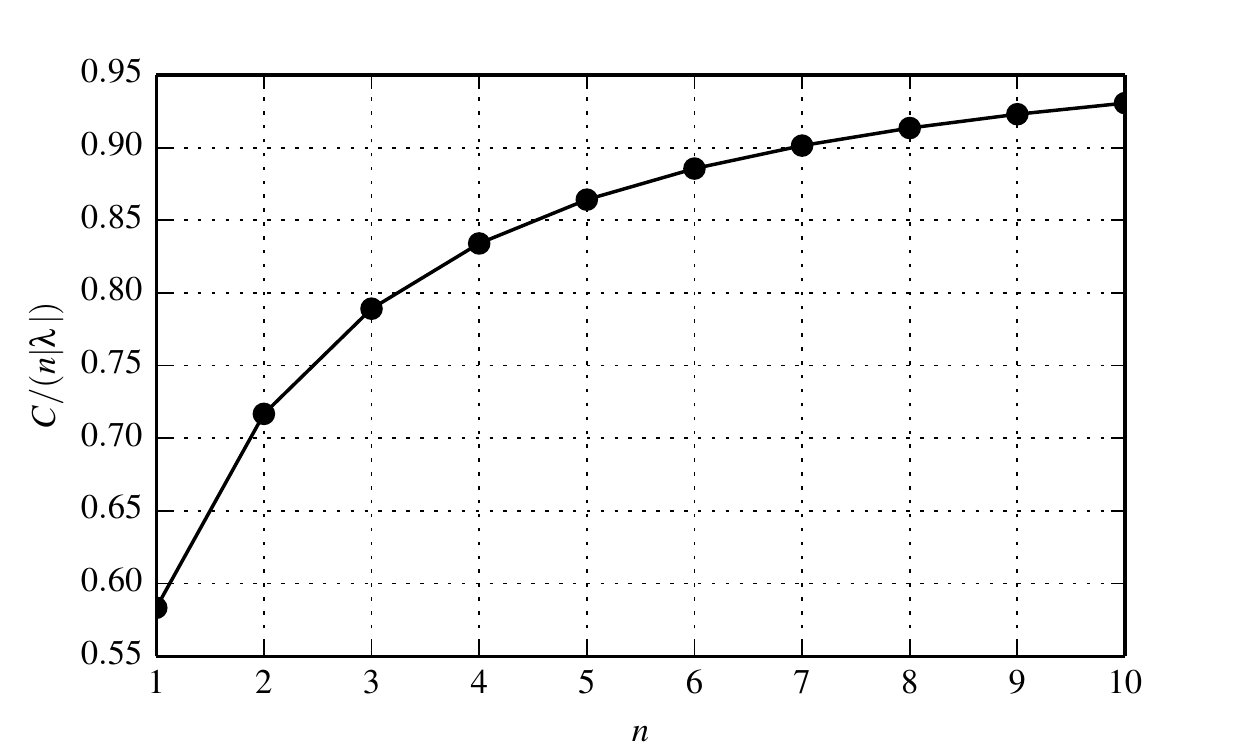} \caption{} \label{fig:example-capacity-n} \end{subfigure} \quad \begin{subfigure}{0.48\textwidth} \centering \includegraphics[width=8cm]{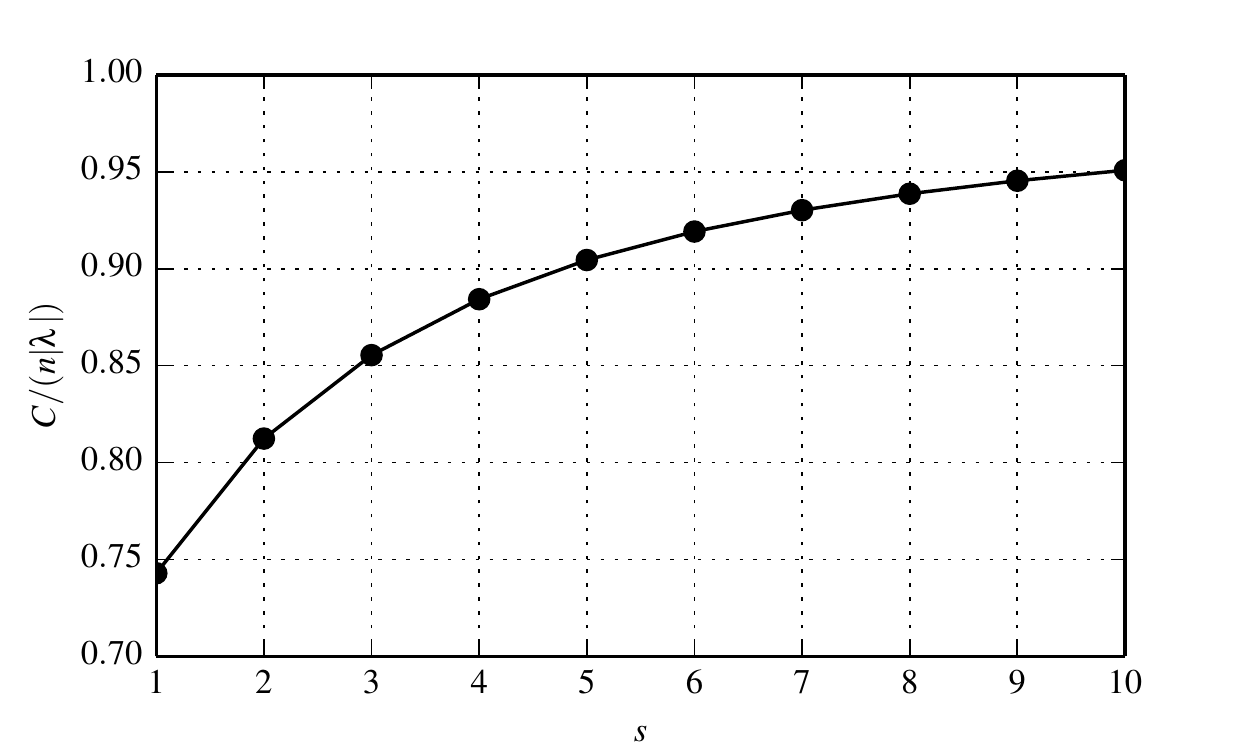} \caption{} \label{fig:example-capacity-s} \end{subfigure} \caption{(a) Shape distribution for $n = 3$ and $s = 2$. (b) Channel capacity (normalized by $n|\lambda|$) as a function of $n$, for $s = 2$ and $\lambda = (\lambda_0,2\lambda_0)$. (c) Channel capacity (normalized by $n |\lambda|$) as a function of $s$, for $n = 3$ and $\lambda = \ell$.} \label{fig:example} \end{figure*} Let $R = \ZZ_{2^s}$, which is a $(2, s)$ chain ring. In addition, suppose that the transfer matrix~$\bm{A} \in R^{m \times n}$ has i.i.d.\ entries uniform over $R$, which is equivalent to say that $\bm{A}$ is uniformly distributed over $R^{m \times n}$ (this is analogous to the transfer matrix distribution considered in~\cite{Jafari.11}). Therefore, the shape distribution of the transfer matrix can be expressed as \[ p_{\bm{\rho}}(\rho) = \frac{|\calT_{\rho}(R^{m \times n})|}{|R^{m \times n}|}, \] where $\calT_{\rho}(R^{m \times n})$ denotes the set of matrices in $R^{m \times n}$ whose shape is $\rho$ (its cardinality can be found in~\cite[Thm.~3]{Feng.13.arXiv}). Suppose, for simplicity that $n = m$. Figure~\ref{fig:example-shape-dist} shows the probability distribution of~$\bm{\rho}$ when $n = 3$ and $s = 2$. Figure~\ref{fig:example-capacity-n} shows the channel capacity, normalized by $n |\lambda|$, as a function of $n$, for $s = 2$ and packet space $W = R^\lambda$, where $\lambda = (\lambda_0, 2\lambda_0)$. Figure~\ref{fig:example-capacity-s} shows the normalized channel capacity as a function of $s$, for $n = 3$ and packet space $W = R^\ell$. \end{example}

\medskip

In order to prove Theorem~\ref{thm:capacity}, we need the following lemma.

\begin{lemma} \label{lem:entropy-GX} Let $\bm{X} \in R^{n \times \lambda}$ be a random matrix, let $A \in R^{m \times n}$ be any fixed matrix, and let $\rho = \shape A$. Define $\bm{Y} = A \bm{X} \in R^{m \times \lambda}$. Then, \[ H(\bm{Y}) \leq \sum_{i=0}^{s-1} \rho_{s-i-1} \lambda_i, \] where equality holds if $\bm{X}$ is uniformly distributed over $R^{n \times \lambda}$. \end{lemma}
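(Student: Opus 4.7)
My approach is to reduce to the case where $A$ is in Smith normal form, and then bound $H(\bm{Y})$ by $\log$ of the support size of the output. Writing $A = PDQ$ with $P \in \GL_m(R)$, $Q \in \GL_n(R)$, and $D$ the Smith normal form~\eqref{eq:smith-normal-form}, set $\bm{X}' \triangleq Q\bm{X}$ and $\bm{Y}' \triangleq P^{-1}\bm{Y} = D\bm{X}'$. Since $R^\lambda$ is an $R$-module, left-multiplication by $Q$ sends $R^{n \times \lambda}$ into itself, and invertibility of $Q$ makes this a bijection; hence $\bm{X}' \in R^{n \times \lambda}$, and the uniform distribution on $R^{n \times \lambda}$ is preserved by this change of variable. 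Because $P$ is invertible, $H(\bm{Y}) = H(\bm{Y}') = H(D\bm{X}')$, so it suffices to prove the lemma with $A$ replaced by $D$.

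Next I would count the support of $D\bm{X}'$. The diagonal of $D$ carries $\rho_j - \rho_{j-1}$ entries equal to $\pi^j$ (with the convention $\rho_{-1} \triangleq 0$) followed by zero entries; consequently each row of $D\bm{X}'$ is either zero or $\pi^j \bm{x}'_i$ for the block-index $j$ determined by the row index~$i$. Since distinct rows of $\bm{X}'$ vary independently over $R^\lambda$, the set of possible values of $D\bm{X}'$ is a Cartesian product, and
\[
\bigl| \{ D X' : X' \in R^{n \times \lambda} \} \bigr| \;=\; \prod_{j=0}^{s-1} |\pi^j R^\lambda|^{\rho_j - \rho_{j-1}} .
\]

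I would then compute $|\pi^j R^\lambda|$. Using the $R$-module decomposition $R^\lambda \cong \bigoplus_{k=0}^{s-1} \langle \pi^k \rangle^{\lambda_k - \lambda_{k-1}}$ (with $\lambda_{-1} \triangleq 0$) together with $|\pi^j \langle \pi^k \rangle| = q^{\max(s-k-j,\,0)}$, a short Abel summation collapses the resulting exponent to give $|\pi^j R^\lambda| = q^{\lambda_0 + \lambda_1 + \cdots + \lambda_{s-j-1}}$. Substituting and swapping the order of summation,
\[
\sum_{j=0}^{s-1} (\rho_j - \rho_{j-1}) \sum_{k=0}^{s-j-1} \lambda_k \;=\; \sum_{k=0}^{s-1} \lambda_k \sum_{j=0}^{s-k-1} (\rho_j - \rho_{j-1}) \;=\; \sum_{k=0}^{s-1} \lambda_k \,\rho_{s-k-1},
\]
which telescopes using $\rho_{-1}=0$, giving the desired upper bound $H(\bm{Y}) \leq \sum_{i=0}^{s-1} \rho_{s-i-1} \lambda_i$.

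Finally, for the equality claim when $\bm{X}$ is uniform, note that $\bm{X}' \mapsto D\bm{X}'$ is an $R$-module homomorphism from $R^{n \times \lambda}$ to $R^{m \times \lambda}$; every non-empty fibre is a coset of its kernel and therefore has the same cardinality. Hence a uniform $\bm{X}'$ induces a uniform $D\bm{X}'$ on the image, saturating the $\log|{\rm support}|$ bound. The main obstacle is purely the index bookkeeping in the two summation-by-parts steps and the passage between the shape $\lambda$, the block widths $\rho_j - \rho_{j-1}$, and their conjugate form $\rho_{s-i-1}$; there is no deep difficulty, but the conventions $\rho_{-1}=\lambda_{-1}=0$ must be stated explicitly.
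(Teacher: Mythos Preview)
Your argument is correct, but it proceeds along a different axis than the paper's.  The paper does \emph{not} reduce to Smith normal form; instead it partitions $\bm{X}$ and $\bm{Y}$ into column blocks $\bm{X}_i,\bm{Y}_i$ of width $\lambda_i-\lambda_{i-1}$, observes that each column of $\bm{Y}_i=A\bm{X}_i$ lies in $\col(\pi^i A)$, and invokes $\shape(\pi^i A)=(0,\ldots,0,\rho_0,\ldots,\rho_{s-i-1})$ together with~\eqref{eq:size-module} to get $|\col\pi^i A|=q^{\rho_0+\cdots+\rho_{s-i-1}}$.  Your route is the transposed one: after the Smith reduction you partition by \emph{rows}, grouping the $\rho_j-\rho_{j-1}$ rows that get scaled by $\pi^j$, and compute $|\pi^j R^\lambda|=q^{\lambda_0+\cdots+\lambda_{s-j-1}}$ from the module structure of $R^\lambda$.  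The two double sums are conjugate rearrangements of the same array, and both collapse by the same telescoping identity.  Your approach buys a cleaner equality step---the coset/fibre argument for the group homomorphism $X'\mapsto DX'$ is one line---whereas the paper has to introduce auxiliary lifts $\bm{X}_i=\pi^i\bm{X}'_i$ and argue uniformity column by column.  On the other hand, the paper avoids having to verify that left multiplication by $Q\in\GL_n(R)$ is a bijection of $R^{n\times\lambda}$ (true because each column constraint $\langle\pi^k\rangle^n=\pi^k R^n$ is $\GL_n(R)$-stable), a point you correctly note but should perhaps justify in one more sentence.
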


\begin{proof} Note that $\bm{X}$ and $\bm{Y}$ can be expressed as \begin{align*} \bm{X} & = \mat{{\bm{X}}_0 & {\bm{X}}_1 & \cdots & {\bm{X}}_{s-1}}, \\ \bm{Y} & = \mat{{\bm{Y}}_0 & {\bm{Y}}_1 & \cdots & {\bm{Y}}_{s-1}}, \end{align*} where ${\bm{X}}_i \in \langle \pi^i \rangle^{n \times (\lambda_i - \lambda_{i-1})}$ and ${\bm{Y}}_i \in \langle \pi^i \rangle^{m \times (\lambda_i - \lambda_{i-1})}$, for $0 \leq i < s$. We have \[ {\bm{Y}}_i = A {\bm{X}}_i, \] so that the support of each of the columns of ${\bm{Y}}_i$ is a subset of~$\col \pi^i A$. We have $\shape \pi^i A = (0, \ldots, 0, \rho_0, \ldots, \rho_{s-i-1})$, so that, from~\eqref{eq:size-module}, we have $|\col \pi^i A| = q^{\rho_0 + \cdots + \rho_{s-i-1}}$. Therefore, the support of $\bm{Y}$ has size at most \begin{align*} \prod_{i=0}^{s-1} |\col \pi^i A|^{\lambda_i - \lambda_{i-1}} & = \prod_{i=0}^{s-1} q^{(\rho_0 + \cdots + \rho_{s-i-1})(\lambda_i - \lambda_{i-1})} \\ & = q^{\sum_{i=0}^{s-1} \rho_{s-i-1} \lambda_i}, \end{align*} from which the inequality follows.

Now suppose ${\bm{X}}$ is uniformly distributed over $R^{n \times \lambda}$. This means that ${\bm{X}}_i$ is uniformly distributed over $\langle \pi^i \rangle^{n \times (\lambda_i - \lambda_{i-1})}$. One may show that there exists ${\bm{X}}'_i$ uniformly distributed over $R^{n \times (\lambda_i - \lambda_{i-1})}$ such that ${\bm{X}}_i = \pi^i {\bm{X}}'_i$. Let ${\bm{y}}$ denote a column of ${\bm{Y}}_i$, whose support is $\col \pi^i A$. Since ${\bm{Y}}_i = A {\bm{X}}_i = \pi^i A {\bm{X}}'_i$, we have, for every $y \in \col \pi^i A$, \begin{align*} \Pr[\bm{y} = y] & = \frac {|\{ x' \in R^n : \pi^i A x' = y \}|} {|R^n|} \\ & = \frac {|\nul \pi^i A|} {|R^n|} \\ & = \frac {1} {|\col \pi^i A|}, \end{align*} that is, $\bm{y}$ is uniformly distributed over its support. Therefore, $\bm{Y}$ itself is also uniformly distributed over its support. This concludes the proof. \end{proof}

We can now prove Theorem~\ref{thm:capacity}.

\begin{proof}[Proof of Theorem~\ref{thm:capacity}] The channel mutual information is given by \begin{align*} I(\bm{X} ; \bm{Y}, \bm{A}) & = I(\bm{X} ; \bm{Y} | \bm{A}) + I(\bm{X} ; \bm{A}) \\ & = H(\bm{Y} | \bm{A}) - H(\bm{Y} | \bm{X} , \bm{A}) + I(\bm{X} ; \bm{A}) \\ & = H(\bm{Y} | \bm{A}), \end{align*} where $H(\bm{Y} | \bm{X}, \bm{A}) = 0$ since $\bm{Y} = \bm{A} \bm{X}$, and $I(\bm{X} ; \bm{A}) = 0$ since $\bm{X}$ and $\bm{A}$ are independent. Thus, \begin{equation*} I(\bm{X} ; \bm{Y}, \bm{A}) = H(\bm{Y} | \bm{A}) = \sum_A p_{\bm{A}}(A) H(\bm{Y} | \bm{A} = A), \end{equation*} and the result follows from Lemma~\ref{lem:entropy-GX}. \end{proof}

\subsection{Coding Scheme}

Here we describe the proposed coding scheme. Before doing so, we present two simple lemmas regarding the solution of systems of linear equations over a finite chain ring, via the $\pi$-adic expansion. These results will serve as a basis for the coding scheme. From now on, let $F = R / \langle \pi \rangle \cong \FF_q$.

\medskip \subsubsection{Auxiliary Results}

The first problem turns a system of linear equations over the chain ring into multiple systems over the residue field.

\begin{lemma} \label{lem:linalg-1} Let $y \in R^n$ and $A \in \GL_n(R)$. Let $x \in R^n$ be the (unique) solution of $Ax = y$. Then, the $\pi$-adic expansion of $x$ can be obtained recursively from \[ A^{(0)} x^{(i)} \equiv_\pi y^{(i)} - \big( A x^{\underline{i}} \big)^{(i)}, \] for $0 \leq i < s$. \end{lemma}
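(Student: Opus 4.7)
The plan is to substitute the truncated $\pi$-adic decomposition $x = x^{\underline{i}} + \pi^i x^{(i)} + \pi^{i+1}(\cdots)$ into $Ax = y$. Applying $A$ and rearranging yields
\[
y - A x^{\underline{i}} \equiv_{\pi^{i+1}} \pi^i A x^{(i)}.
\]
As a weaker consequence I will also use the truncation $y \equiv_{\pi^i} A x^{\underline{i}}$, which follows by dropping the $\pi^i$-term of the same identity.

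Next I would extract the $i$-th $\pi$-adic coefficient from both sides of the first congruence. On the right, $\pi^i A x^{(i)}$ has its first $i$ coefficients equal to $0$, so its $i$-th coefficient is $(A x^{(i)})^{(0)}$; and since $A \equiv_\pi A^{(0)}$ entrywise and $x^{(i)} \in \Gamma^n$, this further reduces mod $\pi$ to $A^{(0)} x^{(i)}$. The left side is more delicate, because the coefficient map $(\cdot)^{(i)}$ is not additive in general. The fix is a small auxiliary fact: whenever $a \equiv_{\pi^i} b$ in $R^n$, the lower coefficients cancel so that $a - b = \pi^i\bigl[(a^{(i)} - b^{(i)}) + \pi(\cdots)\bigr]$, and consequently $(a-b)^{(i)} \equiv_\pi a^{(i)} - b^{(i)}$. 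Applied to $a = y$ and $b = A x^{\underline{i}}$, which satisfy the hypothesis by the weaker truncation above, this gives $(y - A x^{\underline{i}})^{(i)} \equiv_\pi y^{(i)} - (A x^{\underline{i}})^{(i)}$. Equating the two computations of $(y - A x^{\underline{i}})^{(i)}$ yields the stated identity.

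To turn the congruence into an actual recursion, I would finally observe that $A \in \GL_n(R)$ forces $A^{(0)} \in \GL_n(F)$: $\det A$ is a unit in $R$, and its reduction mod $\pi$ equals $\det A^{(0)}$, which is therefore a unit in $F$. Hence the congruence uniquely determines $x^{(i)}$ as an element of $F^n$, and then uniquely as an element of $\Gamma^n$ via the fixed set of representatives. Since the right-hand side depends only on the previously computed coefficients $x^{(0)}, \ldots, x^{(i-1)}$ (bundled into $x^{\underline{i}}$), the scheme indeed proceeds inductively in $i$.

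The main obstacle I anticipate is precisely the non-additivity of the coefficient map: one might naively hope to write $(y - A x^{\underline{i}})^{(i)} = y^{(i)} - (A x^{\underline{i}})^{(i)}$ directly, but this fails in general and only holds modulo $\pi$ under the truncation hypothesis $y \equiv_{\pi^i} A x^{\underline{i}}$. Making this step fully rigorous by stating and proving the auxiliary cancellation fact is the one place where some care is needed; everything else is direct manipulation of the $\pi$-adic expansion.
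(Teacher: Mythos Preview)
Your proposal is correct and follows essentially the same route as the paper. The only cosmetic difference is that the paper applies its auxiliary identity $(u + v\pi^i + w\pi^{i+1})^{(i)} \equiv_\pi u^{(i)} + v^{(0)}$ directly to $y = A x^{\underline{i}} + (A x^{(i)})\pi^i + (\cdots)\pi^{i+1}$ in one stroke, whereas you first subtract $A x^{\underline{i}}$ and then invoke your own auxiliary fact about $(a-b)^{(i)}$ when $a\equiv_{\pi^i} b$; the two auxiliary facts are equivalent, and the remaining steps (reducing $(A x^{(i)})^{(0)}$ to $A^{(0)} x^{(i)}$ mod~$\pi$, and using $A\in\GL_n(R)\Rightarrow A^{(0)}\in\GL_n(F)$) coincide.
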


\begin{proof} For $0 \leq i < s$, we have \[ y = Ax = A \sum_{j=0}^{i-1} x^{(j)} \pi^j + A x^{(i)} \pi^i + A \sum_{j=i+1}^{s-1} x^{(j)} \pi^j, \] so that, from Lemma~\ref{lem:pi-adic-aux}, \[ y^{(i)} \equiv_\pi \big( A x^{\underline{i}} \big)^{(i)} + \big( A x^{(i)} \big)^{(0)}. \] After simplifying and rearranging we get the equation displayed on the lemma. Since $A^{(0)} \in \GL_n(F)$, we can compute, recursively, $x^{(0)}, x^{(1)}, \ldots, x^{(s-1)}$. \end{proof}

The second problem deals with the solution of diagonal systems of linear equations. Let $M_{j:j'}$ denote the sub-matrix of $M$ consisting of rows~$j$ up to, \emph{but not including}, $j'$, where we index the matrix entries starting from~$0$.

\begin{lemma} \label{lem:linalg-2} Let $Y \in R^{m \times \lambda}$ and $D \in R^{m \times n}$, where $D$ is the Smith normal form of itself and has shape $\rho$. If $Y = DX$, then \begin{equation*} X^{(i)}_{0:\rho_{s-i-1}} = \mat { Y^{(i)}_{0 : \rho_0} \\ Y^{(i+1)}_{\rho_0 : \rho_1} \\ \vdots \\ Y^{(i+s-1)}_{\rho_{s-i-2} : \rho_{s-i-1}}}, \end{equation*} for $0 \leq i < s$. \end{lemma}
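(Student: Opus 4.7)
The plan is to derive the identity row by row, using the diagonal structure of $D$ and the uniqueness of the $\pi$-adic expansion. Because $D$ has the shape-$\rho$ Smith normal form~\eqref{eq:smith-normal-form}, the equation $Y = DX$ decouples into the per-row identities $Y_j = \pi^k X_j$ whenever $\rho_{k-1} \leq j < \rho_k$ (with $0 \leq k \leq s-1$ and the convention $\rho_{-1} = 0$), together with $Y_j = 0$ for $j \geq \rho_{s-1}$. All of the real work happens within a single such row equation.

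First I would fix $j$ inside the $k$-th block and expand: writing $X_j = \sum_{i=0}^{s-1} X_j^{(i)} \pi^i$ (entry-wise $\pi$-adic expansion) and multiplying by $\pi^k$, the identity $\pi^s = 0$ kills all terms with $i + k \geq s$, leaving $\pi^k X_j = \sum_{i=0}^{s-1-k} X_j^{(i)} \pi^{i+k}$. Re-indexing by $i' = i + k$ shows that the $\pi$-adic coefficients of $Y_j = \pi^k X_j$ must satisfy $Y_j^{(i')} = 0$ for $i' < k$ and $Y_j^{(i')} = X_j^{(i'-k)}$ for $k \leq i' \leq s-1$, by uniqueness of the $\pi$-adic expansion.

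Second, fixing a target coefficient index $i \in \{0, \ldots, s-1\}$, I would set $i' = i + k$, which is a legal coefficient index precisely when $0 \leq k \leq s-i-1$. This gives $X_j^{(i)} = Y_j^{(i+k)}$ for every $j$ in the $k$-th block. Vertically stacking these identities across $k = 0, 1, \ldots, s-i-1$ covers rows $0$ up to $\rho_{s-i-1} - 1$ of $X^{(i)}$ and produces exactly the block-partitioned matrix on the right-hand side of the lemma.

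The only real difficulty is index bookkeeping: correctly pairing the block index $k$ with the coefficient shift $i + k$ and the row range $[\rho_{k-1}, \rho_k)$, and recognizing that the cutoff at row $\rho_{s-i-1}$ on the left-hand side is dictated by the constraint $i + k \leq s - 1$ forced by $\pi^s = 0$ (rows of $X^{(i)}$ with index $\geq \rho_{s-i-1}$ are genuinely not recoverable from $Y$). Once the indices are aligned, uniqueness of the $\pi$-adic expansion does the rest with no further computation.
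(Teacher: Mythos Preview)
Your proposal is correct and follows essentially the same route as the paper: decouple $Y = DX$ into the block row equations $Y_{\rho_{k-1}:\rho_k} = \pi^k X_{\rho_{k-1}:\rho_k}$, read off the $\pi$-adic shift identity $X^{(i)}_{\rho_{k-1}:\rho_k} = Y^{(i+k)}_{\rho_{k-1}:\rho_k}$, and stack over $k = 0, \ldots, s-i-1$. The only cosmetic difference is that the paper packages the shift identity $(x\pi^k)^{(i+k)} = x^{(i)}$ as a separate auxiliary lemma (Lemma~\ref{lem:pi-adic-aux}), whereas you derive it inline from the uniqueness of the $\pi$-adic expansion.
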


\begin{proof} Note that $Y = DX$ is equivalent to \begin{align*} Y_{0 : \rho_0} & = X_{0 : \rho_0}, \\ Y_{\rho_0 : \rho_1} & = \pi X_{\rho_0 : \rho_1}, \\ & \vdots \\ Y_{\rho_{s-2}: \rho_{s-1}} & = \pi^{s-1} X_{\rho_{s-2} : \rho_{s-1}}. \end{align*} From Lemma~\ref{lem:pi-adic-aux}, this implies \begin{alignat*}{2} X^{(i)}_{0 : \rho_0} & = Y^{(i)}_{0 : \rho_0}, && \quad 0 \leq i < s, \\ X^{(i)}_{\rho_0 : \rho_1}& = Y^{(i+1)}_{\rho_0 : \rho_1}, && \quad 0 \leq i < s-1, \\ & \ \, \vdots && \quad \qquad \vdots \\ X^{(i)}_{\rho_{s-2} : \rho_{s-1}} & = Y^{(i+s-1)}_{\rho_{s-2} : \rho_{s-1}}, && \quad 0 \leq i < 1, \end{alignat*} from which the result follows. \end{proof}

We are finally ready to present the coding scheme, which is based on the ideas of the two previous lemmas. For simplicity of exposition, we first address the particular case of one-shot codes. The general case will be discussed afterwards.

\medskip \subsubsection{Codebook}

We start with the codebook construction. Let $\calC_0, \calC_1, \ldots, \calC_{s-1}$, where $\calC_i \subseteq F^{n \times \lambda_i}$, for $0 \leq i < s$, be a sequence of one-shot matrix codes over the residue field~$F$. We will combine these component codes to obtain a matrix code $\calC \subseteq R^{n \times \lambda}$ over the chain ring~$R$. We refer to $\calC_0, \calC_1, \ldots, \calC_{s-1}$ to as the \emph{component codes}, and to $\calC$ as the \emph{composite code}.

Denote by $\varphi : R \to F$ the natural projection map from $R$ onto $F$. Also, denote by $\bar{\varphi} : F \to \Gamma$ the coset representative selector map, with the property that $\varphi(\bar{\varphi}(x)) = x$ for all $x \in F$. The codebook $\calC \subseteq R^{n \times \lambda}$ is defined by \[ \calC = \left\{ \sum_{i=0}^{s-1} X^{(i)} \pi^i : X_i \in \calC_i, 0 \leq i < s \right\}, \] where \begin{equation} \label{eq:encoding-1} X^{(i)} = \mat{\bar{\varphi}(X_i) & 0} \in \Gamma^{n \times \ell}. \end{equation} It should be clear that the codewords in $\calC$ indeed satisfy the row constraints of $R^{n \times \lambda}$ (see~\S\ref{sub:matrices-row-constraints}). In addition, from the uniqueness of the $\pi$-adic expansion, \begin{equation} \label{eq:composite-size} \Rate(\calC) = \Rate(\calC_0) + \Rate(\calC_1) + \cdots + \Rate(\calC_{s-1}). \end{equation}

\medskip \subsubsection{Decoding}

We now describe the decoding procedure. Intuitively, the decoder decomposes a single MMC over the chain ring into multiple MMCs over the residue field. In the following, $M_{j \times k}$ denotes the upper-left $j \times k$ sub-matrix of~$M$.

\smallskip \textit{Step 1.} The decoder, which knows the transfer matrix~$A$, starts by computing its Smith normal form~$D \in R^{m \times n}$. It also computes $P \in \GL_m(R)$ and $Q \in \GL_n(R)$ such that $A = P D Q$.

\smallskip \textit{Step 2.} Let $\rho = \shape A = \shape D$. Define $\tilde{X} \triangleq Q X \in R^{n \times \lambda}$ (which is unknown to the receiver) and $\tilde{Y} \triangleq P^{-1} Y \in R^{m \times \lambda}$ (which is calculated at the receiver), so that $Y = A X$ is equivalent to \[ \tilde{Y} = D \tilde{X}. \] From this equation, the decoder can obtain partial information about $\tilde{X}$. More precisely, it can compute $\tilde{X}^{(i)}_{\rho_{s-i-1} \times \lambda_i}$, for $0 \leq i < s$, according to Lemma~\ref{lem:linalg-2}.

\smallskip \textit{Step 3.} In possession of $\tilde{X}^{(i)}_{\rho_{s-i-1} \times \lambda_i}$, for $0 \leq i < s$, the decoder will then try to decode $X$ based on the equation \[ \tilde{X} = Q X, \] in a \emph{multistage fashion}. Indeed, similarly to Lemma~\ref{lem:linalg-1}, we have, for $0 \leq i < s$, \[ \tilde{X}^{(i)} - \big( Q X^{\underline{i}} \big)^{(i)} \equiv_\pi Q^{(0)} X^{(i)}. \] Considering only the $\rho_{s-i-1}$ topmost rows (since the remaining rows are unknown), and keeping only the $\lambda_i$ leftmost columns (since the remaining columns are already known to be zero), we get \[ \tilde{X}^{(i)}_{\rho_{s-i-1} \times \lambda_i} - \big( Q_{\rho_{s-i-1} \times n}^{} X^{\underline{i}}_{n \times \lambda_i} \big)^{(i)} \equiv_\pi Q^{(0)}_{\rho_{s-i-1} \times n} X^{(i)}_{n \times \lambda_i}. \] Finally, projecting into~$F$ (that is, applying $\varphi$ to both sides), and appending enough zero rows (in order to obtain an $m \times n$ system) gives \begin{equation} \label{eq:layered-system-equations} Y_i = A_i X_i, \end{equation} where $Y_i \in F^{m \times \lambda_i}$ and $A_i \in F^{m \times n}$ are defined by \begin{equation} \label{eq:layer-Y} Y_i = \mat { \varphi \big(\tilde{X}^{(i)}_{\rho_{s-i-1} \times \lambda_i} \big) - \varphi \Big( \big( Q_{\rho_{s-i-1} \times n}^{} X^{\underline{i}}_{n \times \lambda_i} \big)^{(i)} \Big) \\ 0 }, \end{equation} and \begin{equation} \label{eq:layer-A} A_i = \mat{ \varphi \big( Q_{\rho_{s-i-1} \times n} \big) \\ 0}. \end{equation} Note that $Y_i$ can only be calculated after $X_0, X_1, \ldots, X_{i-1}$ are known. Therefore, in this step the decoder obtains, successively, estimates of $X_0, X_1, \ldots, X_{s-1}$ from~\eqref{eq:layered-system-equations}. Finally, it computes an estimate of $X$ according to~\eqref{eq:encoding-1} and the $\pi$-adic expansion.

\medskip \subsubsection{Extension to the Multi-Shot Case}

We finally consider the multi-shot case. Let $\calC_0, \calC_1, \ldots, \calC_{s-1}$ be a sequence of $N$-shot matrix codes (the component codes), where $\calC_i \subseteq (F^{n \times \lambda_i})^N$, for $0 \leq i < s$. The codewords of the composite code~$\calC$ are then given by $(X(1), X(2), \ldots, X(N)) \in (R^{n \times \lambda})^N$, where $X(j)$ is obtained from the $j$-th coordinates of the codewords of the component codes, similarly to the one-shot case.

Proceeding similarly to Steps 1 and 2 above, the decoder obtains $\tilde{X}^{(i)}_{\rho_{s-i-1} \times \lambda_i}(j)$, for $0 \leq i < s$ and $j = 1, \ldots, N$, and $Q(j)$, for $j = 1, \ldots, N$. Step 3 is also similar, with the important detail that the whole sequence $(X_i(1), X_i(2), \ldots, X_i(N)) \in \calC_i$ is decoded from $(Y_i(1), Y_i(2), \ldots, Y_i(N))$ and $(A_i(1), A_i(2), \ldots, A_i(N))$ by using the decoder of $\calC_i$, before proceeding to stage $i+1$.

\medskip \subsubsection{Computational Complexity}

The computational complexity of the scheme is simply the sum of the individual computational complexities of each component code, plus the cost of calculating the Smith normal form of $A$ (which can be done with $O(nm\min \{n , m \})$ operations in $R$), the cost of calculating $\tilde{Y}$ (taking $O(m^2(m + \ell))$ operations), and the cost of $s-1$ matrix multiplications and additions in~\eqref{eq:layer-Y} (taking $O(n^2\ell)$ operations each). As a consequence, if each component code has polynomial time complexity, then the composite code will also have polynomial time complexity.

\subsection{Achieving the Channel Capacity}

From the proposed coding scheme, it is now clear that the $i$-th component code $\calC_i$ should be aimed at $\CMMC(n, m, \allowbreak \lambda_i, p_{{\bm{A}}_i})$, where ${\bm{A}}_i \in F^{m \times n}$ is defined in~\eqref{eq:layer-A}. In principle, we could compute the probability distribution of~${\bm{A}}_i$, provided we have access to the probability distribution of~$\bm{A}$. Nevertheless, if we employ a universal coding scheme (see Section~\ref{sec:mmc-finite-field}), then the particular probability distribution of~${\bm{A}}_i$ becomes unimportant once we know the expected value of its rank. From~\eqref{eq:layer-A}, we have $\rank {\bm{A}}_i = {\bm \rho}_{s-i-1}$, so that, in this case, only the knowledge of $\EV[\bm{\rho}]$ is needed. Thus, the proposed coding scheme is ``universal'', provided the component codes are also universal. We next show that the scheme is able to achieve the channel capacity.

\begin{proposition} Let $\calC_i \subseteq F^{n \times \lambda_i}$ be a capacity-achieving~code in $\CMMC(n, m, \lambda_i, p_{{\bm{A}}_i})$, for $0 \leq i < s$, where ${\bm{A}}_i \in F^{m \times n}$ is defined in~\eqref{eq:layer-A}. Let $\calC \subseteq R^{n \times \lambda}$ be the composite code obtained from $\calC_0, \calC_1, \ldots, \calC_{s-1}$. Then, $\calC$ is a capacity-achieving code in $\CMMC(n, m, \lambda, p_{\bm{A}})$. \end{proposition}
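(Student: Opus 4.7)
The plan is to decouple the proof into a rate computation and a reliability (vanishing error probability) computation, exploiting the layered structure of the decoder.

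First I would verify that the composite rate matches the capacity. By equation~\eqref{eq:composite-size}, the composite rate decomposes as $\Rate(\calC) = \sum_{i=0}^{s-1} \Rate(\calC_i)$. Since $\bm{A}_i$ defined in~\eqref{eq:layer-A} has rank $\bm{\rho}_{s-i-1}$ (its nonzero rows form $\varphi(Q_{\rho_{s-i-1} \times n})$, and $Q \in \GL_n(R)$ implies $\varphi(Q) \in \GL_n(F)$), Theorem~\ref{thm:capacity-finite-field} gives the component capacity $C_i = \EV[\bm{\rho}_{s-i-1}] \lambda_i$. If each $\calC_i$ achieves $C_i$, then summing yields $\Rate(\calC) = \sum_{i=0}^{s-1} \EV[\bm{\rho}_{s-i-1}] \lambda_i$, which matches the capacity expression of Theorem~\ref{thm:capacity}.

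Second, I would argue that $\Perr(\calC) \to 0$. The key observation is that, conditioned on the event that stages $0, 1, \ldots, i-1$ have been decoded correctly (so that $X^{\underline{i}}$ is available at the decoder with the true value), the vector $Y_i$ computed via~\eqref{eq:layer-Y} is precisely the output that $\calC_i$ would observe when used over $\CMMC(n, m, \lambda_i, p_{\bm{A}_i})$. Moreover, over $N$ channel uses the instances $\bm{A}_i(1), \ldots, \bm{A}_i(N)$ are i.i.d., so the induced channel for stage $i$ is indeed the $N$-shot extension of $\CMMC(n, m, \lambda_i, p_{\bm{A}_i})$. Letting $\calE_i$ denote the event that $\calC_i$ fails at stage $i$ (given the previous stages succeed), a standard union bound yields
\[
\Perr(\calC) \leq \sum_{i=0}^{s-1} \Perr(\calC_i).
\]
Given $\epsilon > 0$, by the assumption that each $\calC_i$ is capacity-achieving, I can choose an $N$ large enough so that $\Perr(\calC_i) \leq \epsilon/s$ for every $i$ and $\Rate(\calC_i) \geq C_i - \epsilon/s$; then $\Perr(\calC) \leq \epsilon$ and $\Rate(\calC) \geq C - \epsilon$, proving achievability.

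The main obstacle is the interdependency between the stages: the equation $Y_i = A_i X_i$ used by the component decoder $\calC_i$ only holds when $X^{\underline{i}}$ has been recovered correctly in the previous stages, so one must verify carefully that conditioning on past correctness yields exactly the memoryless channel $\CMMC(n, m, \lambda_i, p_{\bm{A}_i})$ for which $\calC_i$ is designed. This is handled by inspecting the derivation of~\eqref{eq:layered-system-equations}: the map from $X_i$ to $Y_i$ is the residue-field linear operator $A_i$ plus a deterministic offset that depends only on the previously decoded $X_0, \ldots, X_{i-1}$ and on $Q$, so given those quantities the effective channel seen by $\calC_i$ is genuinely $\CMMC(n, m, \lambda_i, p_{\bm{A}_i})$ (up to an additive shift of $Y_i$ that is known at the decoder and hence does not affect decoding). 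Once this is established, the union-bound argument above closes the proof.
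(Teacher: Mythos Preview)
Your proposal is correct and follows essentially the same approach as the paper: decompose into a rate argument via~\eqref{eq:composite-size} together with $\rank \bm{A}_i = \bm{\rho}_{s-i-1}$, and a reliability argument via the union bound $\Perr(\calC) \leq \sum_i \Perr(\calC_i)$. If anything, your treatment is more careful than the paper's, since you explicitly justify why, conditioned on correct decoding of the earlier stages, the $i$-th component decoder genuinely faces $\CMMC(n, m, \lambda_i, p_{\bm{A}_i})$---a point the paper leaves implicit.
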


\begin{proof} Since each $\calC_i$ is capacity-achieving in $\CMMC(n, m, \allowbreak \lambda_i, p_{{\bm{A}}_i})$, and since $\rank {\bm{A}}_i = {\bm \rho}_{s-i-1}$ [see~\eqref{eq:layer-A}], we have $\Rate(\calC_i)$ arbitrarily close to $\EV[{\bm \rho}_{s-i-1}] \lambda_i$. Thus, from~\eqref{eq:composite-size}, we have $\Rate(\calC)$ arbitrarily close to $\sum_i \EV[{\bm \rho}_{s-i-1}] \lambda_i$, which is the channel capacity. Now, from the union bound, the probability of error of $\calC$ in $\CMMC(n, m, \lambda, p_{\bm{A}})$ is upper-bounded by \begin{equation*} \Perr(\calC) \leq \Perr(\calC_0) + \Perr(\calC_1) + \cdots + \Perr(\calC_{s-1}), \end{equation*} where $\Perr(\calC_i)$ is the probability of error of $\calC_i$ in $\CMMC(n, m, \allowbreak \lambda_i, p_{\bm{A}_i})$. Since each $\calC_i$ is capacity-achieving, we have $\Perr(\calC_i)$ arbitrarily close to zero. Therefore, $\Perr(\calC)$ is also arbitrarily close to zero. \end{proof}

Recall that the two coding schemes proposed in~\cite{Yang.10.arXiv} (see Section~\ref{sec:mmc-finite-field}) are universal and have polynomial time complexity. Consequently, by using them as component codes, we can obtain a universal, capacity-achieving composite code with polynomial time complexity.

\subsection{One-Shot Reliable Communication}

Our last result is concerned with codes that guarantee reliable communication with a single use of the MMC, supposing that the ``(row) shape deficiency'' of the transfer matrix is bounded by a given value. In this paper, a one-shot matrix code $\calC \subseteq R^{n \times \lambda}$ is said to be \emph{$\beta$-shape-deficiency-correcting} if it is possible to uniquely recover $X$ from $(Y, A)$, where $Y = AX$, as long as $X \in \calC$ and $\shape A \succeq n - \beta$. In other words, $\calC$ is $b$-rank-deficiency-correcting if and only if, for every two distinct codewords $X_1, X_2 \in \calC$, there is no matrix $A \in R^{m \times n}$ such that $\shape A \succeq n - \beta$ and $AX_1 = AX_2$. The following result generalizes Theorem~\ref{thm:iff-finite-field}.

\begin{theorem} \label{thm:iff} A code $\calC \subseteq R^{n \times \lambda}$ is $\beta$-shape-deficiency-correcting if and only if there are no distinct $X_1, X_2 \in \calC$ such that $\shape(X_2 - X_1) \preceq \beta$. \end{theorem}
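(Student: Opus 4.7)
My plan is to reduce the statement to a question about a single error matrix $E = X_2 - X_1$: namely, to characterize when the homogeneous equation $AE = 0$ admits a solution $A \in R^{m \times n}$ with $\shape A \succeq n - \beta$. Once this reduction is in place, the core computation is the shape-valued rank-nullity identity \eqref{eq:rank-nullity}.

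First, I would rewrite the definition. For distinct $X_1, X_2 \in \calC$, the equality $AX_1 = AX_2$ holds iff $AE = 0$ with $E = X_2 - X_1 \neq 0$. So the theorem reduces to the following claim: for a fixed nonzero $E \in R^{n \times \ell}$ (with $\ell = \lambda_{s-1}$), some $A \in R^{m \times n}$ satisfies both $\shape A \succeq n - \beta$ and $AE = 0$ iff $\shape E \preceq \beta$.

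Next, I would parametrize the solutions of $AE = 0$. The equation is equivalent to saying that every row of $A$ lies in the left null module $L(E) := \{ y \in R^n : y^{\tr} E = 0 \} = \nul(E^{\tr})$. Thus $\row A \subseteq L(E)$, and since $\shape A = \shape(\row A)$, we have $\shape A \preceq \shape L(E)$; conversely, the upper bound is realized by taking the rows of $A$ from a generating set of $L(E)$, padded with zero rows to reach $m$ rows. Then I would compute $\shape L(E)$ by applying \eqref{eq:rank-nullity} to $E^{\tr} \in R^{\ell \times n}$, obtaining
\[
\shape L(E) \;=\; \shape \nul(E^{\tr}) \;=\; n - \shape(E^{\tr}) \;=\; n - \shape E,
\]
where the last equality uses that the row and column spans of a matrix are isomorphic finite $R$-modules (so shape is transpose-invariant, as is also immediate from the Smith normal form). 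Combining, the best attainable $\shape A$ compatible with $AE = 0$ is $n - \shape E$, hence an $A$ with $\shape A \succeq n - \beta$ exists iff $n - \shape E \succeq n - \beta$, i.e., iff $\shape E \preceq \beta$. The iff in the theorem then follows by contraposition.

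The main obstacle I anticipate is the achievability step in the parametrization: namely, knowing that for any submodule $M \subseteq R^n$ of shape $\mu$ there is an $A \in R^{m \times n}$ with $\row A = M$. By the structure theorem for finite $R$-modules, $M$ admits a generating set of size $\mu_{s-1}$, so such an $A$ exists as long as $m \geq \mu_{s-1}$; applied to $\mu = n - \beta$ this requires $m \geq n - \beta_0$, which is the implicit regime in which the theorem has nonvacuous content (otherwise no $A$ of the required shape exists at all and both sides of the iff are trivially satisfied). All other steps are routine linear algebra over~$R$.
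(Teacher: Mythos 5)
Your argument is the paper's argument, reorganized around a single error matrix $E = X_2 - X_1$: the forward direction of your reduced claim is the paper's second implication (there, $\col E \subseteq \nul A$ with rank-nullity applied to $A$; here, $\row A \subseteq \nul E^{\tr}$ with rank-nullity applied to $E^{\tr}$ — the same computation), and the backward direction is the paper's first implication, taking $\row A = \nul E^{\tr}$. One caveat: the parenthetical claim that when $m < n - \beta_0$ ``both sides of the iff are trivially satisfied'' is incorrect. In that regime no $A$ with $\shape A \succeq n - \beta$ fits in $R^{m \times n}$, so the left side (the correction guarantee) is vacuously true for \emph{every} code, while the right side remains a nontrivial condition on $\calC$ that can fail (for instance, $\calC$ could contain two codewords differing by $\pi^{s-1}$ in a single entry, so that the difference has shape $(0,\ldots,0,1) \preceq \beta$). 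The stated equivalence therefore actually requires an implicit hypothesis such as $m \geq n$, which both your construction and the paper's proof tacitly rely on to ensure that a matrix with the required row module exists in $R^{m \times n}$; your instinct to flag the existence step was right, but the resolution is an added hypothesis, not triviality of the right-hand side.
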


\begin{proof} Assume first that $\calC \subseteq R^{n \times \lambda}$ is $\beta$-shape-deficiency-correcting. Suppose, for the sake of contradiction, that there exist distinct $X_1, X_2 \in \calC$ such that $\shape(X_2 - X_1) \preceq \beta$. Let $A \in R^{m \times n}$ be any matrix such that $\row A = \nul (X_2 - X_1)^\tr$. Then, $A (X_2 - X_1) = 0$ so that $A X_1 = A X_2$. Also, \[ \shape A = \shape \nul (X_2 - X_1)^\tr = n - \shape (X_2 - X_1) \succeq n - \beta, \] where we made use of~\eqref{eq:rank-nullity}. This is a contradiction.

Assume now that there are no distinct $X_1, X_2 \in \calC$ such that $\shape(X_2 - X_1) \preceq \beta$. Suppose, for the sake of contradiction, that $\calC \subseteq R^{n \times \lambda}$ is $\beta$-shape-deficiency-correcting. Then, there exist distinct $X_1, X_2 \in \calC$ and a matrix $A \in R^{m \times n}$ such that $A X_1 = A X_2$ and $\shape A \succeq n - \beta$. We have $A(X_2 - X_1) = 0$, so that $\col (X_2 - X_1)$ must be a submodule of $\nul A$. Thus, \begin{equation*} \shape (X_2 - X_1) \preceq \shape (\nul A) = n - \shape A \preceq \beta, \end{equation*} where we again made use of~\eqref{eq:rank-nullity}. This is a contradiction. \end{proof}

We next show that the coding scheme proposed by this work can also provide shape deficiency correction guarantees. For such, the component codes are chosen to be MRD codes with suitable dimensions.

\begin{proposition} Suppose $\lambda_0 \geq n$. Let $\calC_i \subseteq F^{n \times \lambda_i}$ be a linear MRD code of dimension $n - \beta_i$, for $0 \leq i < s$. Let $\calC \subseteq R^{n \times \lambda}$ be the composite code obtained from $\calC_0, \calC_1, \ldots, \calC_{s-1}$. Then, $\Rate(\calC) = \sum_i (n - \beta_i) \lambda_i$, and $\calC$ is $\beta$-shape-deficiency-correcting. \end{proposition}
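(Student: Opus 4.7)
The rate claim is essentially automatic. By~\eqref{eq:composite-size}, $\Rate(\calC) = \sum_i \Rate(\calC_i)$, and the hypothesis $\lambda_0 \geq n$ implies $\lambda_i \geq n$ for all $i$ (since $\lambda$ is non-decreasing), so each $\calC_i$ may indeed be chosen as a linear MRD code of dimension $n - \beta_i$ in $F^{n \times \lambda_i}$. Gabidulin's rate formula then gives $\Rate(\calC_i) = (n - \beta_i)\lambda_i$, and summing yields the claimed rate.

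For the shape-deficiency-correcting property, the plan is to invoke Theorem~\ref{thm:iff}: it suffices to show that every pair of distinct codewords $X_1, X_2 \in \calC$ satisfies $\shape(X_2 - X_1) \not\preceq \beta$. Writing $X_k = \sum_{i} \bar{\varphi}(X_{i,k}) \pi^i$ (with implicit zero-padding to $\ell$ columns) for $k = 1, 2$, where $X_{i,k} \in \calC_i$, let $i^*$ be the smallest index such that $X_{i^*,1} \neq X_{i^*,2}$; uniqueness of the $\pi$-adic expansion guarantees $i^*$ exists. Since the terms with $i < i^*$ cancel, I would factor out $\pi^{i^*}$ to obtain
\[
E \triangleq X_2 - X_1 = \pi^{i^*} E^*, \qquad E^* \triangleq \sum_{j \geq i^*} \bigl( \bar{\varphi}(X_{j,2}) - \bar{\varphi}(X_{j,1}) \bigr) \pi^{j - i^*}.
\]
Reducing $E^*$ modulo $\pi$ annihilates every term with $j > i^*$ and leaves $X_{i^*,2} - X_{i^*,1} \in F^{n \times \lambda_{i^*}}$ (zero-padded). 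Because $\calC_{i^*}$ is a linear MRD code of dimension $n - \beta_{i^*}$, its minimum rank distance equals $\beta_{i^*} + 1$, whence $\rank_F(E^* \bmod \pi) \geq \beta_{i^*} + 1$.

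The crux of the argument, and the step I expect to demand the most care, is to relate $\shape E = \shape(\pi^{i^*} E^*)$ to $\shape E^*$. The plan is to pass through the Smith normal form: writing $E^* = P D Q$ with $D$ having $\nu_k - \nu_{k-1}$ diagonal entries equal to $\pi^k$ (where $\nu = \shape E^*$ and $\nu_{-1} = 0$), one has $E = P(\pi^{i^*} D) Q$, and $\pi^{i^*} D$ is itself already a Smith normal form, with each $\pi^k$ promoted to $\pi^{k+i^*}$ (entries of exponent $\geq s$ collapsing to zero). Reading off the resulting shape from~\eqref{eq:smith-normal-form} yields
\[
\shape E = \bigl(\underbrace{0, \ldots, 0}_{i^*},\; \nu_0,\; \nu_1,\; \ldots,\; \nu_{s - i^* - 1}\bigr),
\]
so in particular $(\shape E)_{i^*} = \nu_0 = \rank_F(E^* \bmod \pi) \geq \beta_{i^*} + 1 > \beta_{i^*}$. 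Thus $\shape(X_2 - X_1) \not\preceq \beta$, and the conclusion follows from Theorem~\ref{thm:iff}.
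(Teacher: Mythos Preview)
Your proposal is correct and follows essentially the same route as the paper's proof: invoke Theorem~\ref{thm:iff}, pick the least index $i^*$ where the component codewords differ, factor out $\pi^{i^*}$, and read off the $i^*$-th coordinate of the shape as the $F$-rank of the residue, which is bounded below by $\dR(\calC_{i^*}) = \beta_{i^*}+1$. The only cosmetic difference is that you prove the shape-shift identity $\shape(\pi^{i^*}E^*) = (0,\ldots,0,\nu_0,\ldots,\nu_{s-i^*-1})$ inline via the Smith normal form, whereas the paper packages this as Lemma~\ref{lem:pi-shape}; your handling of the difference $\bar{\varphi}(X_{j,2})-\bar{\varphi}(X_{j,1})$ (rather than $\bar{\varphi}(X_{j,2}-X_{j,1})$) is in fact slightly cleaner than the paper's own write-up.
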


\begin{proof} We have $\Rate(\calC_i) = (n - \beta_i) \lambda_i$, so that the expression for~$\Rate(\calC)$ follows from~\eqref{eq:composite-size}. We now show that $\calC$ is $\beta$-shape-deficiency-correcting. Suppose not. Then, according to Theorem~\ref{thm:iff}, there exists two distinct codewords $X_1, X_2$ such that $\delta = \shape(X_2 - X_1) \preceq \beta$. On the other hand, we have $X_1 = \sum_{j=0}^{s-1} \bar{\varphi}(X_{1,j}) \pi^j$, for some $X_{1,j} \in \calC_j$, and likewise $X_2 = \sum_{j=0}^{s-1} \bar{\varphi}(X_{2,j}) \pi^j$, for some $X_{2,j} \in \calC_j$. Let $i$ such that $0 \leq i < s$ be the smallest integer satisfying $X_{1,i} \neq X_{2,i}$. We then have \[ X_2 - X_1 = \sum_{j=0}^{s-1} \bar{\varphi}(X_{2,j} - X_{1,j}) \pi^j = \sum_{j=i}^{s-1} \bar{\varphi}(X_{2,j} - X_{1,j}) \pi^j = \pi^i \sum_{j=0}^{s-i-1} \bar{\varphi}(X_{2,j+i} - X_{1,j+i}) \pi^j. \] From Lemma~\ref{lem:pi-shape} of Appendix~\ref{sec:auxiliary}, and from the fact that the $0$-th entry of $\shape A$ is $\rank \varphi(A)$, we conclude that \[ \delta_i = \rank(X_{2,i} - X_{1,i}) = \dR(X_{1,i}, X_{2,i}) \geq \dR(\calC_i) = \beta_i + 1 > \beta_i, \] where we also used the fact that $\calC_i$ is MRD. This contradicts the fact that $\delta = \shape(X_2 - X_1) \preceq \beta$, so that $\calC$ must be $\beta$-shape-deficiency-correcting. \end{proof}

Similarly to the finite-field case, if $\calC \subseteq R^{n \times \lambda}$ is $(n - \rho)$-shape-deficiency-correcting for every $\rho$ in the support of $\bm{\rho} = \shape \bm{A}$, then $\calC$ is a zero-error coding scheme for $\CMMC(n, m, \lambda, p_{\bm{A}})$. In particular, if the channel is such that $\bm{\rho} = \rho$ is a constant, the above construction yields a one-shot zero-error capacity-achieving code whose encoding and decoding procedures have polynomial time complexity.

\subsection{Extension to the Non-Coherent Scenario}

So far, we have only considered the coherent scenario, in which the instances of the transfer matrix are available to the receiver. Nevertheless, we can reuse the coding scheme proposed in this work even in a non-coherent scenario, by means of \emph{channel sounding} (also known as \emph{channel training}). In this technique, the instances of $\bm{A}$ are provided to the receiver by introducing headers in the transmitted matrix $\bm{X} \in R^{n \times \lambda}$, that is, by setting $\bm{X} = \mat{I & {\bm{X}}'}$, where $I \in R^{n \times n}$ is the identity matrix, and ${\bm{X}}' \in R^{n \times (\lambda - n)}$ is a payload matrix coming from a matrix code. For this to work, we clearly need $\lambda_0 \geq n$. Note that channel sounding introduces an overhead of $n^2$ symbols. Nevertheless, the overhead can be made negligible if we are allowed to arbitrarily increase the packet length, that is, the proposed scheme can be capacity-achieving in this asymptotic scenario.

\section{Conclusion} \label{sec:conclusion}

In this work, we investigated coherent multiplicative matrix channels over finite chain rings, which have practical applications in physical-layer network coding. As contributions, we computed the channel capacity, and we determined a necessary and sufficient condition under which a one-shot code can provide shape deficiency correction guarantees. These results naturally generalizes the corresponding ones for finite fields. Furthermore, a coding scheme was proposed, combining several component codes over the residue field to obtain a new composite code over the chain ring. It was shown that if the component codes are suitably chosen, then the composite code is able to achieve the channel capacity and provide shape correction guarantees, both with polynomial time complexity.

Several points are still open. The capacity of the non-coherent MMC, a problem addressed in~\cite{Yang.10.arXiv,Nobrega.13} for the case of finite fields, still needs to be generalized for the case of finite chain rings. Also, designing capacity-achieving coding schemes for the non-coherent MMC with small $\lambda$ is still an open problem, even in the finite-field case.

\appendices \section{Auxiliary Results} \label{sec:auxiliary}

In this appendix, we mention a few basic results that help us compute with $\pi$-adic expansions.

\begin{lemma} \label{lem:pi-adic-aux} Let $x, y, z \in R$. Then, for every $i$, $0 \leq i < s$, we have \begin{enumerate} \item $\left(x \pi^i \right)^{(i+j)} = x^{(j)}$, for $0 \leq j < s-i$; and \item $(x + y \pi^i + z \pi^{i+1})^{(i)} \equiv_\pi x^{(i)} + y^{(0)}$. \end{enumerate} \end{lemma}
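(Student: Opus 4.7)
The plan is to prove both items by direct manipulation of $\pi$-adic expansions, relying on two ingredients: the uniqueness of the expansion $x = \sum_{k=0}^{s-1} x^{(k)} \pi^k$ with coefficients in $\Gamma$, and the nilpotency relation $\pi^s = 0$.

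For item~1, I would start from the $\pi$-adic expansion of $x$ and multiply through by $\pi^i$. Nilpotency kills every term $x^{(k)} \pi^{k+i}$ with $k+i \geq s$, leaving $x \pi^i = \sum_{k=0}^{s-i-1} x^{(k)} \pi^{k+i}$. Re-indexing by $m = k+i$ rewrites this as $\sum_{m=0}^{s-1} c_m \pi^m$ with $c_m = 0 \in \Gamma$ for $m < i$ and $c_m = x^{(m-i)} \in \Gamma$ for $m \geq i$. Because every $c_m$ lies in $\Gamma$, this is a legitimate $\pi$-adic expansion of $x\pi^i$, and uniqueness immediately yields $(x\pi^i)^{(i+j)} = x^{(j)}$ for $0 \leq j < s-i$.

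For item~2, my strategy is to compute the $i$-th coefficient by working modulo $\pi^{i+1}$, which determines the first $i+1$ $\pi$-adic coefficients of any element. I would write $x = x^{\underline{i}} + x^{(i)} \pi^i + \pi^{i+1} r$ for some $r \in R$, use $y \equiv_\pi y^{(0)}$ to get $y\pi^i \equiv_{\pi^{i+1}} y^{(0)}\pi^i$, and note that $z\pi^{i+1} \equiv_{\pi^{i+1}} 0$. Combining these gives $x + y\pi^i + z\pi^{i+1} \equiv_{\pi^{i+1}} x^{\underline{i}} + (x^{(i)} + y^{(0)}) \pi^i$. The coefficient $x^{(i)} + y^{(0)}$ is a sum of two elements of $\Gamma$ and so may escape $\Gamma$; to remedy this I would apply its own $\pi$-adic expansion, $x^{(i)} + y^{(0)} = (x^{(i)} + y^{(0)})^{(0)} + \pi w$ with $w \in R$, and absorb the $w \pi^{i+1}$ contribution into the modulus. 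What remains, $x^{\underline{i}} + (x^{(i)} + y^{(0)})^{(0)} \pi^i$, is a bona fide truncated $\pi$-adic expansion, so uniqueness of coefficients modulo $\pi^{i+1}$ forces $(x + y\pi^i + z\pi^{i+1})^{(i)} = (x^{(i)} + y^{(0)})^{(0)}$, and the latter is congruent to $x^{(i)} + y^{(0)}$ modulo $\pi$ by the general identity $t^{(0)} \equiv_\pi t$.

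The main care will be to distinguish the coefficient $(x + y\pi^i + z\pi^{i+1})^{(i)}$, which lives in $\Gamma$, from the element $x^{(i)} + y^{(0)}$ on the right-hand side, which lives in $R$ and need not be in $\Gamma$; this is exactly why the conclusion is a congruence rather than an equality. Beyond this bookkeeping I do not anticipate any real obstacle, since both items are essentially consequences of uniqueness of $\pi$-adic expansions plus the observation that addition in $R$ may produce ``carries'' that sit one order of $\pi$ higher and are therefore irrelevant modulo $\pi^{i+1}$.
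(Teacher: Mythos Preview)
Your proposal is correct and follows essentially the same approach as the paper: both arguments derive item~1 directly from uniqueness of the $\pi$-adic expansion after multiplying through by $\pi^i$, and both handle item~2 by discarding all contributions of order $\pi^{i+1}$ or higher and then dealing with the possible ``carry'' in $x^{(i)}+y^{(0)}$ via uniqueness. The only cosmetic difference is that the paper manipulates the operator $(\cdot)^{(i)}$ through a chain of equalities (invoking item~1 at the end), whereas you reduce everything modulo $\pi^{i+1}$ first and then read off the $i$-th coefficient; these are equivalent ways of saying the same thing.
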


\begin{proof} The first claim follows from the uniqueness of the $\pi$-adic expansion. For the second claim, we have \begin{align*} (x + \pi^i y + \pi^{i+1} z)^{(i)} & \quad = \left( \sum_{j=0}^{s-1} \pi^j x^{(j)} + \pi^i \sum_{j=0}^{s-1} \pi^j y^{(j)} + \pi^{i+1} \sum_{j=0}^{s-1} \pi^j z^{(j)}\right)^{(i)} \\ & \quad \overset{(a)} = \left( \sum_{j=0}^{i} \pi^j x^{(j)} + \pi^i y^{(0)} \right)^{(i)} \\ & \quad = \left( \sum_{j=0}^{i-1} \pi^j x^{(j)} + \pi^i (x^{(i)} + y^{(0)}) \right)^{(i)} \\ & \quad \overset{(b)} = \left( \pi^i (x^{(i)} + y^{(0)}) \right)^{(i)} \\ & \quad \overset{(c)} = \left( x^{(i)} + y^{(0)} \right)^{(0)} \equiv_\pi x^{(i)} + y^{(0)}, \end{align*} where $(a)$ follows because factors of $\pi^{i+1}$ do not contribute to the value of the $i$-th term of the $\pi$-adic expansion, $(b)$ is true from the uniqueness of the $\pi$-adic expansion, and $(c)$ follows from the first claim with $j=0$. \end{proof}

\begin{lemma} \label{lem:pi-shape} Let $A \in R^{m \times n}$, and let $\rho = \shape A$. Then, \[ \shape \pi^i A = (\underbrace{0, \ldots, 0}_{i}, \rho_0, \rho_1, \ldots, \rho_{s-i-1}). \] \end{lemma}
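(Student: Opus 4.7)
The plan is to compute $\shape \pi^i A$ directly from the Smith normal form of $A$ and then read off the diagonal of the transformed matrix. By hypothesis, there exist $P \in \GL_m(R)$ and $Q \in \GL_n(R)$ such that $A = PDQ$, where $D$ is the Smith normal form given in \eqref{eq:smith-normal-form} with parameter $\rho$. First I would multiply through by $\pi^i$ to obtain $\pi^i A = P(\pi^i D)Q$, noting that since $P, Q$ remain invertible, the shape of $\pi^i A$ is determined by that of $\pi^i D$.

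Next I would analyze the diagonal entries of $\pi^i D$. Each original entry $\pi^k$ becomes $\pi^{k+i}$; for $k + i < s$ this is a nontrivial power of $\pi$, while for $k + i \geq s$ it collapses to zero because $\pi^s = 0$. Reading off \eqref{eq:smith-normal-form}, the nonzero diagonal entries of $\pi^i D$ consist of $\rho_0$ copies of $\pi^i$, $\rho_1 - \rho_0$ copies of $\pi^{i+1}$, and so on, up to $\rho_{s-i-1} - \rho_{s-i-2}$ copies of $\pi^{s-1}$; the entries coming from $\pi^{s-i}, \ldots, \pi^{s-1}$ in $D$ all vanish. These powers are already in nondecreasing order and each divides the next, so $\pi^i D$ is itself in Smith normal form.

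Comparing the resulting diagonal against \eqref{eq:smith-normal-form}, I would match the counts: the number of $1$'s is $0$, the number of $\pi$'s is $0$, \ldots, the number of $\pi^{i-1}$'s is $0$, the number of $\pi^i$'s is $\rho_0$, the number of $\pi^{i+1}$'s is $\rho_1 - \rho_0$, and more generally the number of $\pi^{i+j}$'s is $\rho_j - \rho_{j-1}$ for $0 \leq j \leq s-i-1$. By the uniqueness of shape this identifies $\shape \pi^i A$ as the $s$-tuple with $i$ leading zeros followed by $\rho_0, \rho_1, \ldots, \rho_{s-i-1}$, which is exactly the claimed formula.

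There is no real obstacle here: the only point requiring a bit of care is the bookkeeping that translates the cumulative shape representation $\rho = (\rho_0, \ldots, \rho_{s-1})$ into the block sizes $\rho_{j} - \rho_{j-1}$ appearing in the Smith normal form, and verifying that entries with exponent $\geq s$ are indeed sent to zero rather than to some unintended power of $\pi$. Both facts are immediate from the conventions in Section~\ref{sec:chain-rings}.
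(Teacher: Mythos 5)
Your proof is correct and follows essentially the same route as the paper's: reduce to the Smith normal form via $\pi^i A = P(\pi^i D)Q$, note that conjugation by invertible $P,Q$ preserves shape, and read off the shape of the diagonal matrix $\pi^i D$. The paper simply asserts $\shape \pi^i D$ without spelling out the bookkeeping you provide, so your version is a slightly more detailed rendering of the same argument.
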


\begin{proof} Let $P \in \GL_m(R)$, $Q \in GL_n(R)$, and $D \in R^{m \times n}$ such that $A = PDQ$ and $D$ is the Smith normal form of $A$. Recall that $\shape D = \shape A = \rho$. Then, \[ \shape \pi^i A = \shape \pi^i PDQ = \shape P \pi^iD Q = \shape \pi^iD = (\underbrace{0, \ldots, 0}_{i}, \rho_0, \rho_1, \ldots, \rho_{s-i-1}), \] completing the proof. \end{proof}

\section*{Acknowledgments}

The authors would like to thank Prof.\ Frank R.\ Kschischang for useful discussions.

\end{document}